\renewcommand\normalsize{%
    \@setfontsize\normalsize{11.7}{14pt plus .3pt minus .3pt}%
    \abovedisplayskip 10\p@ \@plus4\p@ \@minus4\p@
    \abovedisplayshortskip 6\p@ \@plus2\p@
    \belowdisplayshortskip 6\p@ \@plus2\p@
    \belowdisplayskip \abovedisplayskip}
\renewcommand\small{%
    \@setfontsize\small{9.5}{12\p@ plus .2\p@ minus .2\p@}%
    \abovedisplayskip 8.5\p@ \@plus4\p@ \@minus1\p@
    \belowdisplayskip \abovedisplayskip
    \abovedisplayshortskip \abovedisplayskip
    \belowdisplayshortskip \abovedisplayskip}
\renewcommand\footnotesize{%
    \@setfontsize\footnotesize{8.5}{9.25\p@ plus .1pt minus .1pt}
    \abovedisplayskip 6\p@ \@plus4\p@ \@minus1\p@
    \belowdisplayskip \abovedisplayskip
    \abovedisplayshortskip \abovedisplayskip
    \belowdisplayshortskip \abovedisplayskip}
\theoremstyle{plain}%
\newtheorem{theorem}{Theorem}
\newtheorem{proposition}[theorem]{Proposition}%
\newtheorem{lemma}[theorem]{Lemma}%
\newtheorem{assertion}[theorem]{Assertion}%
\theoremstyle{remark}%
\newtheorem{example}{Example}%
\newtheorem{remark}{Remark}%
\theoremstyle{definition}%
\newtheorem{definition}{Definition}%
\begin{document}
\title[Model of Protein Interactions]{A Novel Mathematical Model of Protein Interactions from the Perspective of Electron Delocalization}

\author {Naoto Morikawa}

\begin{abstract}
Proteins are the workhorse molecules of the cell and perform their biological functions by binding to other molecules through physical contact. Protein function is then regulated through coupling of bindings on the protein (\textit{allosteric regulation}). Just as the genetic code provides the blueprint for protein synthesis, the coupling is thought to provide the basis for protein communication and interaction. However, it is not yet fully understood how binding of a molecule at one site affects binding of another molecule at another distal site on a protein, even more than $60$ years after its discovery in $1961$.

In this paper, I propose a simple mathematical model of protein interactions, using a “quantized” version of differential geometry, i.e., the \textit{discrete differential geometry of $n$-simplices}. The model is based on the concept of \textit{electron delocalization}, one of the main features of quantum chemistry, Allosteric regulation then follows tautologically from the definition of interactions.

No prior knowledge of conventional discrete differential geometry, protein science, or quantum chemistry is required. I hope this paper will provide a starting point for many mathematicians to study chemistry and molecular biology.
\end{abstract}

\keywords{quantized differential geometry, the discrete differential geometry of n-simplices, intermolecular interaction, protein allosteric regulation, regular continuation, loop decomposition}

\maketitle

\tableofcontents

\section{Introduction}\label{sec1}

Chemistry is a science of contradictions, and for most mathematicians it would take time to grasp the whole picture of chemistry. First, chemistry is about bonding, and chemical phenomena are often described in terms of chemical bonds \cite{PB1,ZS2}. “Bonding is what separates chemistry from physics” \cite{CH3}. Second, as Charles A. Coulson lamented, “a chemical bond is not a real thing. It does not exist. No one has ever seen one. No one ever can. It is a figment of our own imagination” \cite{CA4}. Third, as Roald Hoffmann says, “any rigorous definition of a chemical bond is bound to be impoverishing” \cite{PB1}. According to Robert S. Mulliken, “the more we know and compute, the more concepts disappear” \cite{RS5}.

In this paper, I propose a simple mathematical model of intermolecular interactions of proteins that does not rely on the concept of “chemical bonding,” using the discrete differential geometry of $n$-simplices \cite{NM6}. This paper does not concern an application of existing mathematics to problems in protein science, but rather a proposal of a novel mathematical framework. The usefulness of the proposed model is demonstrated by addressing a problem in protein science\footnote{i.e., elucidation of the mechanism of allosteric regulation.} in protein science. For simplicity, I only consider the case of $2$-simplices, i.e. triangles. The ultimate goal of the research is a mathematical description of protein interactions that elucidates “protein allosteric regulation” (see below) based on the concept of “electron delocalization” (i.e., sharing of electrons over more than two atoms\footnote{“Chemical bonding” corresponds to a sharing of electrons over two atoms.}), one of the main features of quantum chemistry. I hope this paper will open up a new avenue/shortcut for mathematicians to study chemistry and molecular biology.

“Protein allosteric regulation,” first mentioned in 1961 by Jacques Monod and Francois Jacob \cite{MJ7}, is the coupling between two molecular-binding events on the surface of a protein, where binding at one site (the \textit{functional} site) is affected (i.e., inhibited or activated) by binding at another distal site (the \textit{regulatory} site) \cite{Ce8, LN9, WM10, MW11}. Just as the genetic code provides the blueprint for protein synthesis, allostery provides the basis for protein communication and interaction \cite{MC12}. Allosteric regulation is ubiquitous in living systems and may provide an innovative approach to developing more selective drugs with fewer side effects \cite{GT13, CM14}. Despite the importance, it is not yet fully understood how binding at the regulatory site affects binding at the functional site.

No prior knowledge of conventional discrete differential geometry, protein science, or quantum chemistry is required. A basic understanding of chemistry and quantum mechanics should suffice. Rather, I attempted to write this paper not only to propose a novel mathematical model, but also to provide a concise introduction to the field for researchers in mathematical sciences.

Finally, Genocript (\url{http://www.genocript.com}) is the one-man bio-venture started by Naoto Morikawa in $2000$ which is developing software tools for protein structure analysis.

\begin{figure}
\centering
\captionsetup{width=1.0\linewidth}
\includegraphics{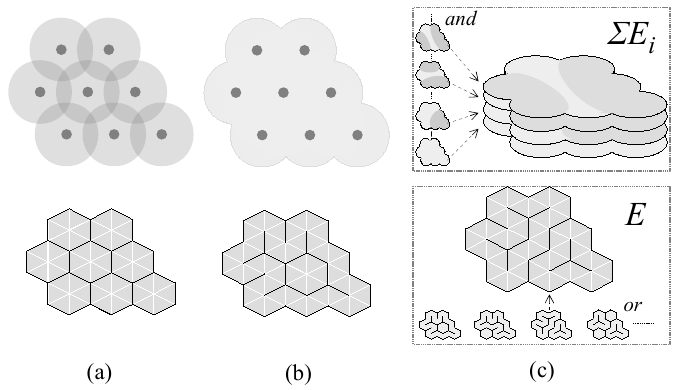}
\caption{The current approach (top) vs. the proposed approach (bottom). 
(a) Interacting atoms. (b) The molecule formed. (c) The state of the molecule.}
\label{figure1}
\end{figure}

\section{Outline of the proposed model}\label{sec2}

\subsection{Fundamentals of quantum mechanics of proteins}\label{sec21}

Proteins are large, complex molecules, with a median size of about $200$ residues \cite{MW15} and an average half-life of about $105$ hours \cite{SE16}. They perform their biological functions by directly interacting with other molecules through physical contact.

According to quantum mechanics, molecules consist of positively charged nuclei (of atoms) embedded in a cloud of negatively charged electrons (Figure \ref{figure1} (b) top). There are no atoms and no bonds within a molecule, only nuclei and indistinguishable electrons. Because of the indistinguishability, electrons cannot be assigned to specific nuclei. Three types of interactions are at work between nuclei and electrons: (1) the long-range (i.e., power-law decay) attraction between nucleus and electrons, (2) the long-range repulsion between electrons, and (3) the short-range (i.e., exponential-law decay) attraction between nuclei, called \textit{bonding interaction}. (1) and (2) are due to electrostatic interactions between charged particles. (3) is due to quantum theoretical interactions between the corresponding atoms. The formation of a bonding interaction\footnote{The bonding interaction of atoms occurs when the resulting molecule has lower energy than it would have if it had been separated.} is considered to be a consequence of the kinetic energy lowering resulting from \textit{delocalization of electrons} over several nuclei (\cite{SI17, LG18}).

The state (i.e., structure and energy) of a molecule is given by the time-independent Schrödinger equation, a second-order partial differential equation involving the spatial coordinates of all the nuclei and electrons. Antisymmetric solutions to the equation are called wave functions, where the sign of wave functions changes when two electrons are exchanged. Antisymmetry is due to the indistinguishability of electrons, which leads to the Pauli repulsion: two electrons can occupy the same region of space only if they have opposite spin.  In principle, wave functions describe all electrons and nuclei in the molecule. However, it is not possible to solve the Schrödinger equation for many-electron molecules, and approximations are required.

First, since the mass of the nucleus is several thousand times greater than the mass of the electron, we usually assume that the nucleus is in a fixed position relative to the electron (the Born–Oppenheimer approximation). Second, we assume that the motion of an electron is independent of all other electrons. In particular, we write the multi-electron wavefunction as an antisymmetrized product of one-electron functions. These one-electron functions, delocalized throughout the internal space of a molecule, are called molecular orbitals.\footnote{If the molecule is an atom, they are called atomic orbitals.}

Then we can solve the Schrödinger equation iteratively using the average potential generated by the other electrons (the Hartree-Fock method). While the orbital approximation of many-electron molecules is a powerful method, it is important to keep in mind that the electrons are indistinguishable from each other. In particular, the orbitals are fictitious and not physically observable. Moreover, “the decomposition of the total wave function into one-electron orbitals is subject to arbitrary decisions depending on what the author considers reasonable. Whatever type of orbitals are used, they are just a model, which play a crucial role in chemistry” \cite{F19}. That is, “for chemists, interpretation based on orbitals is a very intuitive process. Orbital-based definition of, for example, the charge-transfer term is quite natural and aligns well with experimentalists’ ideas” \cite{PC20}.

Figure \ref{figure1} compares two approaches to describing the state of a molecule: the current one (top) and the proposed one (bottom). 

Figure \ref{figure1} (a) top shows eight interacting atoms. In the current approach, each atom consists of a nucleus (small dark gray disk) and an electron cloud (large light gray disk). Note that the electron clouds overlap. 

Figure \ref{figure1} (b) top shows the molecule formed as the result of the interaction between the eight atoms. In the current approach, atoms are \textit{bonded} together to form a molecule through covalent interactions (i.e., \textit{full} sharing of electrons). As the result, the eight nuclei (of atoms) become embedded within a cloud of electrons spread throughout the entire molecule. The arrangement of the nuclei determines the shape of the molecule, while the state of the electron cloud determines the state of the molecule. Regarding interactions between molecules, molecules are attracted each other to form an intermolecular complex through weak but abundant non-covalent interactions (e.g., \textit{partial} sharing of electrons). For details on intermolecular interactions, see Subsection \ref{sec41}.

Figure \ref{figure1} (c) top shows the state (i.e., structure and energy) of the molecule. In the current approach, the state of a molecule is described by a solution to the Schrödinger equation, currently solved using the orbital approximation. Each orbital is spread throughout the molecule and contains up to two electrons. A molecule with $2n$ electrons is then denoted as a \textit{product} of $n$ or more orbitals.\footnote{Some orbitals may be empty.} The energy of a molecule is obtained as the sum of the energies of the occupied orbitals. In this example, the molecule is a product of four orbitals\footnote{The molecule is assumed to be an eight-electron system, i.e., one electron for eact nucleus.}, and its energy is given by $\sum_{1\leq i \leq 4}E_i$, where $E_i$'s are the energies of the four orbitals. 

\subsection{The proposed model of protein interactions}\label{sec22}

\begin{figure}
\centering
\captionsetup{width=1.0\linewidth}
\includegraphics{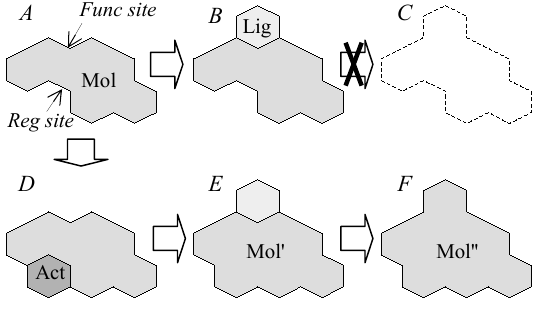}
\caption{Example of allosteric regulation.}
\label{figure2}
\end{figure}

As shown in Figure \ref{figure1} (a) bottom, in the proposed approach, each atom is represented as a loop of six triangles (i.e., a hexagon consisting of six triangles), and the loops do not overlap. 

As shown in Figure \ref{figure1} (b) bottom, in the proposed approach, \textbf{a molecule (such as a protein) is a loop of triangles,\footnote{If there are loops enclosed within a loop, they are considered to be part of the enclosing loop.} where interactions between molecules (such as protein interactions) correspond to fusion and fission of loops}. There is no distinction between covalent interactions and non-covalent interactions. In this example, eight atoms (i.e., loops) are fused together to form a molecule (i.e., loop) of $48$ triangles. In other words, the $48$ triangles are spread throughout the molecule and the six triangles of each atom are now \textit{delocalized} throughout the entire molecule.

As shown in Figure \ref{figure1} (c) bottom, in the proposed approach, \textbf{the state of a molecule corresponds to a division of a loop into a collection of loops (i.e., a \textit{loop complex})}.\footnote{A loop complex is a state of a molecule if the component loops can fuse into a single loop. Otherwise, a loop complex is a molecular complex.} The component loops are not spread throughout the entire molecule, and the state of the molecule is denoted as a \textit{disjoint union} of the component loops. The energy $E$ of the state of a molecule is then given as the sum of the energies of the component loops.

Loop decompositions of a molecule (i.e., divisions of a loop into a loop complex) are computed using flows of triangles (See Section \ref{sec3}). A vector field of triangles is defined on the region occupied by the molecule, and a loop complex is obtained as a collection of closed trajectories contained within the region. The energy $E$ of the state of the molecule is then defined by 
\begin{equation*}
E := \Sigma_{\textit{all loops of the loop complex}} \ 1/ (\textit{loop length}).
\end{equation*}
By definition, the fewer the number of loops contained in a loop complex, the lower the energy of the loop complex.\footnote{It is a consequence of quantum mechanics that a lower energy orbital is formed by constructive interference of two higher energy orbitals.} In particular, a loop passing through the entire region gives a most stable state and is called a \textit{ground state}.\footnote{A region may have more than one loop that passes through the entire region.} In the following, loops passing through the entire region is referred to as \textit{one-stroke} loops for short. The existence of interactions between two molecules is then determined by the existence of one-stroke loops passing through the entire region occupied by the two molecules. 

In Figure \ref{figure1} (c) bottom, four loop decompositions of a molecule are shown (small figures): a one-stroke loop of length $48$, a complex of two loops (length $42$ and length $6$), a complex of two loops (length $30$ and length $18$), and a complex of four loops (two loops of length 18 and two loops of length $6$). Their energies are $E=1/48 \approx 0.02$, $E=1/30+1/18 \approx  0.09$, and $E=2/6+2/18 \approx 0.44$, respectively. Since there is a  one-stroke loop passing through the entire region occupied by the eight atoms, they interact to form a molecule. The selected state (large figure) corresponds to the complex of two loops (length $30$ and length $18$), whose energy is about $0.09$.

\begin{figure}
\centering
\captionsetup{width=1.0\linewidth}
\includegraphics{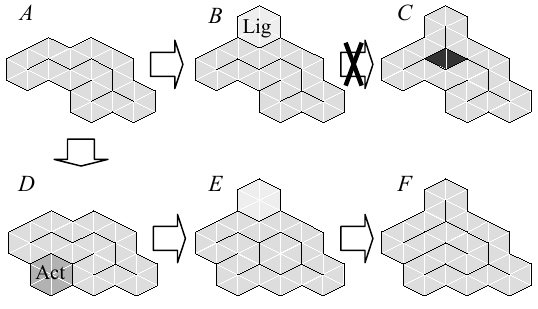}
\caption{The proposed mechanism of allosteric regulation}
\label{figure3}
\end{figure}

Finally, let's briefly explain how the underlying mechanism of allosteric regulation is realized in the proposed model. As explained below, allostery follows directly (tautologically) from the definition of intermolecular interactions.

Figure \ref{figure2} shows a molecule \textit{Mol} with a functional site and a regulatory site (A). The functional site provides a complementary shape to a small molecule \textit{Lig} called \textit{ligand} (B). The regulatory site provides a complementary shape to a small molecule \textit{Act} called \textit{activator} (D). We then suppose that (1) \textit{Act} always binds to the regulatory site, and (2) \textit{Act} regulates the binding of \textit{Lig} to \textit{Mol}, i.e., \textit{Lig} binds to the functional site only when \textit{Act} has already bound to the regulatory site. 

Figure \ref{figure3} shows the proposed mechanism of Figure 2. First, molecules \textit{Mol}, \textit{Lig}, and \textit{Act} are denoted as loops of triangles (A, B and D). We denote the region occupied by \textit{Mol}, \textit{Lig}, and \textit{Act} as $|Mol|$, $|Lig|$, and $|Act|$, respectively. Since there is no one-stroke loop passing through the union of $|Mol|$ and $|Lig|$ (C), they do not interact. On the other hand, since there is a one-stroke loop passing through the union of $|Mol|$ and $|Act|$, they interact to form \textit{Mol’} (E). In the same way, \textit{Mol’} and \textit{Lig} interact to form \textit{Mol’’} (F).

\subsection{Previous studies on discrete differential geometry}\label{sec23a}

Discrete differential geometry (DDG) is often explained in the context of computer science \cite{CW20a, GDS20b}. In computer science, smooth objects (such as smooth curves) are represented as discrete objects (such as polygonal lines). Since the local shape of a discrete object is \textbf{not differentiable}, it is described using vertex positions, edge angles, and similar attributes rather than derivatives. DDG provides the mathematical foundation and algorithms for handling discrete geometric data, which is ubiquitous in modern computing.

Currently, various discretization methods such as the finite difference method (FDM), the finite element method (FEM), and the finite volume method (FVM) are used in numerical simulations of partial differential equations (PDEs). The difference between DDG and those numerical methods for PDEs lies in the fact that discrete differential geometry aims not at discretization of objects/equations, but at discretization of the whole theory of classical differential geometry, where the latter appears as a limit of refinement of the discretization. According to \cite{BS20c}, “there is a common belief that the smooth theories can be obtained in a limit from the corresponding discrete ones.” 

On the other hand, the discrete differential geometry of $n$-simplices (DDGNS) focuses primarily on “quantization” rather than “discretization” of classical differential geometry. Just as classical mechanics does not appear as a smooth limit of quantum mechanics, classical differential geometry does not appear as a smooth limit of DDGNS. The research subject of DDGNS is polygonal lines\footnote{An $n$-dimensional object obtained by connecting $n$-dimensional $n$-simplices one by one via their common faces is referred to as a \textit{polygonal line} by abuse of terminology. In the text, it is called a \textit{trajectory} of $n$-simplices.} consisting of $n$-simplices in \textbf{$R^n$}.

Unlike the polygonal lines studied in DDG, the polygonal lines in DDGNS \textbf{have a second derivative}. Even a protein structure alignment software (\textit{ComSubstruct}) has been developed using second derivatives \cite{NM20d}. Furthermore, one can define morphisms between flows of $n$-simplices in \textbf{$R^n$} using the concept of fusion and fission of loops. This has led to an attempt to use category theory to algebraically describe the shape of closed polygonal lines consisting of $n$-simplices \cite{NM20e}.

\subsection{Previous studies on protein allosteric regulation}\label{sec23}

Proteins exist in a dynamic equilibrium, constantly fluctuating between multiple similar conformations rather than having a single rigid structure. In previous studies, allosteric regulation is often described thermodynamically without detailed insight into its mechanisms. “Phenomenological models have been developed that successfully describe the thermodynamics of allostery but do not reveal its underlying molecular mechanism” \cite{LM21}. In particular, I know of no geometric mechanism that could explain \textbf{allosteric regulation without significant structural alteration}.\footnote{Allosteric regulation \textit{with} significant structural alteration is induced by steric hindrance.}

It is now acknowledged that allostery is due to the re-distribution of existing conformations induced by molecular binding, such as binding of an activator to its regulatory site \cite{MW11,MC12}. Protein activity is then regulated not only by changes in the average conformation induced by changes in enthalpy (i.e., the amount of thermal energy stored), but also by changes in dynamic fluctuations induced by changes in entropy (i.e., the number of possible conformations).

In this scenario, {allosteric regulation without significant structural alteration} corresponds to a “re-distribution of conformations that does not change the average position of the atoms” \cite{CD22}. This type of allostery is driven primarily by entropy changes, where molecular binding may increase thermal fluctuations (i.e., conformational entropy gain) or decrease thermal fluctuations (i.e., conformational entropy loss) \cite{TK23,W24,HR25}.

On the other hand. in the proposed approach, I consider \textbf{energy lowering due to electron delocalization} to analyze {allosteric regulation without significant structural alteration}. This approach is plausible because the energy lowering caused by {electron delocalization} is the driving force behind {bonding interactions} \cite{SI17, LG18}.

Finally, in previous structural studies, proteins are usually described in a bottom-up fashion, where they are typically represented as a graph, with vertices corresponding to amino acids and edges corresponding to chemical bonds. Network analysis is then performed to analyze the mechanism of allosteric regulation \cite{GZ26, NV27, VC28, MW29, CT30}. 

On the other hand, in the proposed approach, \textbf{proteins are described in a top-down fashion}, dealing directly with the electron clouds of entire proteins. In doing so, we can implicitly incorporate an effect of quantum mechanics, i.e., electron delocalization, into the model. Moreover, we can apply global theories of mathematics to the problems of proteins. For example, in this paper, we often encounter a situation where the region covered by a loop has holes. In many cases, we can determine the existence of loop decompositions of the holes without filling them. That is, if a loop decomposition exists for every hole, the loop and the holes constitute a loop decomposition of the entire region.

\section{The discrete differential geometry of triangles}\label{sec3}

This section briefly introduces the discrete differential geometry of triangles. The proposed approach uses this geometry to provide proteins (and molecules in general) with a differential structure. In the following, \textbf{$Z$} denotes the collection of all integers, \textbf{$N$} denotes the collection of all natural numbers, and \textbf{$R^n$} denots the $n$-dimensional Euclidean space. 

\subsection{Flows of triangles}\label{sec31}

\begin{figure}
\centering
\captionsetup{width=1.0\linewidth}
\includegraphics{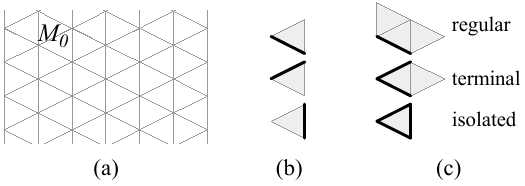}
\caption{(a) The mesh $M_0$. (b) Normal sides (thick lines) of regular triangles. (c) Local flows.}
\label{figure4}
\end{figure}

Figure \ref{figure4} (a) shows the mesh $M_0$ of triangles on which flows of triangles are defined\footnote{Flows on other types of meshes are obtained as flows of triangles induced on the surfaces of a trajectory of $n$-simplices ($n>2$). See Section \ref{sec5}}. $M(M_0)$ denotes the collection of all triangles of $M_0$.

\begin{definition} [Normal vector fields of triangles]\label{NVF_of_T}
A \textit{normal vector field} $V$ on $M_0$ is an assignment of a collection of edges to each triangles of $M_0$, i.e., 
\begin{equation}
V:  M(M_0)\ni t \mapsto V(t) \subset E(t).
\end{equation}
where $E(t)$ is the collection of the three edges of $t$. $NVF(M_0)$ denotes the collection of all normal vector fields on $M_0$.
\end{definition}

\begin{definition} [Normal sides of a triangle]
Given $V \in NVF(M_0)$ and $t \in M(M_0)$. The \textit{normal sides} of $t$ are the edges contained in $V(t)$. Note that $t$ may have more than one normal side.
\end{definition}

In the following, normal sides play the role of \textit{normal vectors}.

\begin{definition}[Classification of triangles]\label{Class_of_T}
Given $V \in NVF(M_0)$ and $t \in M(M_0)$. 
\begin{enumerate}
\item $t$ is called a \textit{regular} triangle (of $V$) if $V(t)$ consists of one edge (Figure \ref{figure4} (b)). $D_1(V)$ denotes the collection of all regular triangles of $V$. 
\item $t$ is called a \textit{branch} triangle (of $V$) if $V(t)$ is empty. $D_0(V)$ denotes the collection of all branch triangles of $V$. 
\item $t$ is called a \textit{terminal} triangle (of $V$) if $V(t)$ conststs of two edges. $D_2(V)$ denotes the collection of all terminal triangles of $V$ by $D_2(V)$. 
\item $t$ is called an \textit{isolated} triangle (of $V$) if $V(t)$ conststs of three edges. $D_3(V)$ denotes the collection of all isolated triangles of $V$.
\end{enumerate}
By definition,
\begin{equation}
M(M_0)=D_0(V) \cup D_1(V) \cup D_2(V) \cup D_3(V).
\end{equation}
$D_1(V)$ is called the \textit{domain} of $V$. $t$ is called a \textit{singular} triangle if it is not regular. 
\end{definition}

\begin{definition}[Regular normal vector fields]
Given $V \in NVF(M_0)$. $V$ is called \textit{b-regular} if $M(M_0)=D_0(V) \cup D_1(V)$. $V$ is called \textit{regular} if $M(M_0)=D_1(V)$.
\end{definition}

\begin{definition}[Local flows of triangles]
Given $V \in NVF(M_0)$ and $s \in M(M_0)$. The \textit{local flow} $F_s$ generated by $V$ at $s$ is the collection of triangles consisting of $s$ and the triangles connected to $s$ by edges other than the normal sides (Figure \ref{figure4} (c)). $F_s$ consists of (1) four triangles if $s \in D_0(V)$, (2) three triangles if $s \in D_1(V)$, (3) two triangles if $s \in D_2(V)$, or (4) one triangle if $s \in D_3(V)$.  
\end{definition}

Connecting local flows, we obtain \textit{integral curves} of the given normal vector field.

\begin{definition}[Trajectories of triangles]
Given $V \in NVF(M_0)$ and $s \in M(M_0)$. The \textit{trajectory} $\psi_s$ of $V$ through $s$ is the unique maximal
chain of triangles obtained by connecting local flows of $V$ starting from $s$ (i.e., the unique maximal \textit{integral curve} of $V$ through $s$). A trajectory $\psi_s$ is called \textit{locally regular} if $\psi_s \subset D_1(V)$, and \textit{regular} if $V$ is regular. 
\end{definition}

A locally regular trajectory $\psi_s$ is a linear sequence of triangles. Suppose that $\psi_s =\{ \cdots, t_0=s, t_1, t_2, \cdots\}$. The local flow $F_{t_n}$ at $t_n \in \psi_s$ is then given by $\{t_{n-1}, t_n, t_{n+1}\}$ ($n \in \mathbf{Z}$).

\begin{definition}[$|\psi|_0$ and $len_0(\psi)$]
Given $V \in NVF(M_0)$ and a trajectory $\psi$ of $V$. $|\psi|_0$ denotes the region swept by $L$, i.e.,
\begin{equation}
|\psi|_0:=\{ t \ |\ t \in \psi \} \subset M(M_0).
\end{equation}
The \textit{length} $len_0(\psi)$ of $\psi$ is the number of triangles contained in $|\psi|_0$, i.e.,
\begin{equation}
len_0(\psi):= \sharp |\psi|_0 \in \textbf{N}.
\end{equation}
\end{definition}

In the proposed model, molecules (including proteins) are denoted as a \textit{loop} of triangles.

\begin{definition}[Loops and loop complices]\label{loops}
Given $V \in NVF(M_0)$ and a locally regular trajectory $\psi$ of $V$. $\psi$ is called a \textit{loop} of $V$ if it is closed. If there are trajectories enclosed within a loop, they are considered to be part of the enclosing loop. \textbf{In other words, loops may have holes inside (i.e., \textit{singularities}).} A \textit{loop complex} is a collection of loops that are in contact.
\end{definition}

\begin{lemma}
Given $V \in NVF(M_0)$ and a loop $L$ of $V$ with enclosed trajectories $\{\psi_1, \psi_2, \ldots, \psi_k\}$. If $L$ is regular, all $\psi_i$ ($1\leq i \leq k$) are loops.
\end{lemma}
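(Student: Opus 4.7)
The plan is to combine the finiteness of the region enclosed by a loop in the planar mesh $M_0$ with the determinism of the flow under a regular normal vector field. First I would note that since $L$ is a loop, its triangle set $|L|_0$ is finite, and the bounded region that $L$ encloses in $M_0$ contains only finitely many triangles of $M(M_0)$. By Definition \ref{loops}, each enclosed trajectory $\psi_i$ lies entirely within this bounded region, so $|\psi_i|_0$ is itself a finite subset of $M(M_0)$.

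Second, the regularity of $L$ means that $V$ is regular, so $M(M_0)=D_1(V)$ and every triangle $t$ has exactly one normal side. Consequently the local flow $F_t=\{t_-,t,t_+\}$ attaches a unique unordered pair of non-normal neighbours to each $t$. From this I would draw two conclusions: the trajectory through any triangle is uniquely determined in both directions, and two distinct trajectories share no triangle (otherwise their local flows would coincide at the shared triangle and, by propagation through the chain, everywhere). In particular, each $\psi_i$ is automatically locally regular and $|\psi_i|_0$ is disjoint from $|L|_0$.

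With these two facts in place, a pigeonhole argument closes each $\psi_i$. Walking along $\psi_i$ in one direction through a finite triangle set, some triangle $t$ must eventually be revisited. Because the local flow at $t$ is uniquely determined by $V$, revisiting $t$ forces the successor and predecessor to agree with their earlier values, so the chain reconnects into a cycle after finitely many steps. Hence $\psi_i$ is a closed locally regular trajectory, i.e.\ a loop of $V$.

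The step I expect to require the most care is the first one: making rigorous, in the combinatorial vocabulary of the paper, the claim that the interior of a loop in the planar mesh $M_0$ is a finite subset of $M(M_0)$. This is intuitively clear from planarity together with the finiteness of $len_0(L)$, but it is the one place where the argument must invoke the global topological structure of $M_0$ rather than the purely local definitions of the normal vector field. Once that combinatorial Jordan-type statement is granted, the determinism-plus-pigeonhole portion of the proof is essentially immediate.
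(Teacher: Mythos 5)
Your argument is correct, and it is a faithful unpacking of the paper's one-line proof, which simply states that the lemma follows immediately from the definitions: regularity of $V$ makes every enclosed trajectory locally regular and deterministic, and finiteness of the region enclosed by $L$ in the planar mesh forces each such trajectory to close up. No substantive difference in approach; you have merely supplied the details the paper leaves implicit.
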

\begin{proof}
It follows immediately from the definitions.
\end{proof}

\begin{definition}[$|L|$ and $len(L)$]
Given $V \in NVF(M_0)$ and a loop $L$ of $V$ with enclosed trajectories $\{\psi_1, \psi_2, \ldots, \psi_k\}$. The \textit{extended region} $|L|$ of $L$ is defined by
\begin{equation}
|L|:=\{ t \ |\ t \in L \cup \psi_1 \cup \psi_2 \cup \cdots \cup \psi_k \} \subset M(M_0).
\end{equation}
The \textit{extended length} $len(L)$ of molecule $L$ is defined by
\begin{equation}
len(L):= \sharp |L| \in \textbf{N}.
\end{equation}
Note that there may be holes in $|L|_0$, but not in $|L|$.
\end{definition}

Recall that the state of a molecule is determined by its structure and energy. We define the \textit{energy} of a loop as follows.

\begin{definition}[$E(L)$]\label{E_of_traj}
Given $V \in NVF(M_0)$ and a loop $L$ of $V$. The \textit{energy} $E(L)$ of $L$ is defined by 
\begin{equation}
E(L):= 1 / len(L).
\end{equation}
\end{definition}

\begin{remark}
This is one of the simplest definitions that satisfies the following two constraints (1) $E(L_1)+ E(L_2) > E(L_3)$ if $len(L_1)+ len(L_2) = len(L_3)$\footnote{According to quantum mechanics, bonding orbitals are lower in energy than the individual orbitals.}, and (2) $E(L) \rightarrow 0$ if $len(L) \rightarrow \infty$\footnote{This corresponds to the energy lowering due to electron delocalization.}. 
\end{remark}

Collecting all the trajectories of a given normal vector field, we obtain the flow of the normal vector field.

\begin{definition}[$F_V$]
Given $V \in NVF(M_0)$. The \textit{flow} $F_V$ on $M_0$ generated by $V$ is the disjoint union of trajectories of $V$ that covers the entire $M_0$:
\begin{equation}
F_V:= \sum_{s_i \in A} \psi_{i},
\end{equation}
where $A \subset M(M_0)$ such that (1) $\bigcup_{s_i \in A} {\psi_{i}}= M(M_0)$, and (2) $\psi_i \cap \psi_j = \emptyset$ if $s_i \neq s_j$ ($s_i, s_j \in A$).\footnote{We often write $\psi_i$ instead of $\psi_{s_i}$.} $F_V$ is called \textit{b-regular} if $V$ is b-regular, and \textit{regular} if $V$ is regular.
\end{definition}

\begin{definition}[$E(\Psi)$]\label{E_of_traj2}
Given $V \in NVF(M_0)$ and a loop complex $\Psi=\{L_1, L_2, \cdots, L_k \}$ of $V$. The energy of $\Psi$ is defined by
\begin{equation}
E(\Psi):= \sum_{1 \leq i \leq k} E(L_i).
\end{equation}
\end{definition}

\begin{definition}[$E(F_V)$]
Given $V \in NVF(M_0)$. Suppose that $V$ is regular and the collection of all loops of $V$ is given by $\{L_1, L_2, \cdots, L_k \}$. The energy $E(F_V)$ of $F_V$ is then defined by
\begin{equation}
E(F_V):= \sum_{1 \leq i \leq k} E(L_i).\footnote{Only closed trajectories are counted.}
\end{equation}
\end{definition}

\begin{example} Shown in Figure \ref{figure1} (c) bottom is a loop complex consisting of two loops with lengths $30$ and $18$. The energy $E$ of the loop complex is then given by $E=1/30+1/18 \approx 0.09$.
\end{example}

\subsection{Regular continuations}\label{sec31a}

In this paper, we compute \textit{divisions of a loop $L$ into a loop complex} (Figure \ref{figure1} (c) bottom) using a normal vector field $V$ which is \textit{regular on $|L|$}. Divisions of $L$ are then obtained as a decomposition of $|L|$ into loops of $V$.

First, we construct a normal vector field along the outline $\partial{R}$ of a given region $R$. 

\begin{definition} [$V_P$]
By connecting edges of triangles of $M(M_0)$ one by one, we obtain a polygonal line, say $P$. The \textit{normal vector field $V_P \in NVF(M_0)$ defined by $P$} is given by
\begin{equation}
 V_P (t):=P \cap t \quad (t \in M(M_0)).
\end{equation}
$P$ is called \textit{b-regular} if $V_P$ is b-regular (i.e., $V_P$ has no terminal or isolated triangles).
\end{definition}

We then expand the domain of $V_P$ as follows.
 
\begin{definition} [Regular continuations of $V_P$] Given a polygonal line $P$ and $V_P \in NVF(M_0)$. Assigning normal sides to branch triangles (i.e., triangles with no normal sides) of $V_P$, we obtain $V’ \in NVF(M_0)$ such that (1) $D_1(V') \supsetneq  D_1(V_P)$ and (2) $V’(t)=V_P(t)$ for $t \in D_1(V_P)$. $V’$ is called a \textit{regular continuation} of $V_P$ (over $D_1(V’)$).
\end{definition}

A poligonal line $P$ is usually given as $\partial{R}$ for a region $R$ consisting of finite number of triangles. The normal vector field $V_{\partial{R}}$ is then expanded toward the inside of $R$. When considering loop decompositions of a region, it is sufficient to deal with the following type of regions.

\begin{definition} [Finite boundary regular regions]
Given a region $R$ on $M_0$. $M(R)$ denotes the collection of all triangles contained in $R$. $R$ is called \textit{finite} if $M(R)$ consists of a finite number of triangles. $R$ is called \textit{boundary regular} if $\partial{R}$ is b-regular, where $\partial{R}$ denotes the outline of $R$. 
\end{definition}

\begin{definition} [Regular continuations of $V_{\partial{R}}$]
Given a boundary regular region $R$ on $M_0$ and a regular continuation $V'$ of $V_{\partial{R}}$. $V’$ is called a \textit{regular continuation} of $V_P$ over $R$ if $M(R) \subset D_1(V’)$.
\end{definition}

\begin{figure}
\centering
\captionsetup{width=1.0\linewidth}
\includegraphics{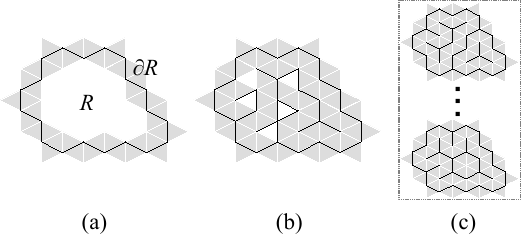}
\caption{(a) Boundary regular region $R$. Regular triangles are colored gray. (b) A regular continuation. It has one branch triangle, three terminal triangles, and one isolated triangle on $R$. (c) Regular continuations of $V_{\partial{R}}$ over $R$.}
\label{figure5}
\end{figure}

\begin{definition} [Locally regular regions]
Given a boundary regular region $R$ on $M_0$. $R$ is called \textit{locally regular} if there is a regular continuation of $V_{\partial{R}}$ over $R$. $R$ is called \textit{regular} if there is a regular continuation of $V_{\partial{R}}$ over the whole space $M(M_0)$. 
\end{definition}

\begin{remark}
In this paper, three types of regularity are defined: (1) regularity of normal vector fields, (2) regularity of trajectories, and (3) regularity of regions.
\end{remark}

Note that we can decompose a region $R$ on $M_0$ into a collection of loops within $R$ if $R$ is locally regular.

\begin{example} 
The region $R$ of Figure \ref{figure5} (a) is boundary regular. In Figure \ref{figure5} (b), $V_{\partial{R}}$ is expanded to a regular continuation of $V_{\partial{R}}$ over a subset of $R$. In Figure \ref{figure5} (c), $V_{\partial{R}}$ is expanded to regular continuations of $V_{\partial{R}}$ over $R$.
\end{example} 

In general, there are more than one regular continuation of $V_{\partial{R}}$ over a given region $R$.

\begin{definition} [$RCNT(R)$]
Given a region $R$ on $M_0$. $RCNT(R)$ denotes the \textit{collection of all regular continuations of $V_{\partial{R}}$ over $R$}.
\end{definition}

\begin{proposition} [Loop decomposition of $R$]
Let $R$ be a finite locally regular region on $M_0$. Suppose that $RCNT(R) \neq \emptyset$. Then, there is a finite collection of loops $\{L_1, L_2, \cdots, L_k\}$ such that (1) $L_i$’s do not intersect each other and (2)
\begin{equation}
R=\bigcup_{1 \leq i \leq k} |L_i|.
\end{equation}
\end{proposition}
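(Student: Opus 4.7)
The plan is to realize the decomposition as the collection of closed orbits of the induced flow of any regular continuation. I would pick $V' \in RCNT(R)$, which is nonempty by hypothesis. By definition $M(R) \subset D_1(V')$, so every triangle of $R$ is regular under $V'$ and every trajectory of $V'$ through $M(R)$ is locally regular.

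The first and most delicate step is to show such trajectories cannot escape $R$. If $t \in M(R)$ is adjacent to $\partial R$, then b-regularity of $\partial R$ prohibits $t$ from carrying two or three boundary edges, so $t$ has exactly one edge $e$ on $\partial R$ and $V_{\partial R}(t) = \{e\}$, i.e.\ $t \in D_1(V_{\partial R})$. The extension clause in the definition of a regular continuation then forces $V'(t) = V_{\partial R}(t) = \{e\}$, so the unique normal side of $t$ sits on $\partial R$ and its two non-normal edges — the ones through which a trajectory leaves $t$ — lie strictly inside $R$. Triangles of $M(R)$ not touching $\partial R$ have all three edges interior \emph{a fortiori}. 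Hence every local flow based in $M(R)$ is contained in $M(R)$, and so is every trajectory of $V'$ that meets $R$.

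Next, finiteness of $M(R)$ together with local regularity forces each such trajectory to be periodic: by pigeonhole some triangle is revisited, and the deterministic pair-of-neighbors structure at regular triangles propagates the coincidence in both directions, so the trajectory closes up — a loop in the sense of Definition \ref{loops}. Collecting these closed orbits yields a finite partition $\{\ell_1, \dots, \ell_m\}$ of $M(R)$ into pairwise disjoint loops, and by the lemma immediately preceding the proposition any $\ell_i$ enclosed inside another $\ell_j$ is itself a loop. I would then take $L_1, \dots, L_k$ to be the outermost elements of $\{\ell_1, \dots, \ell_m\}$ under the enclosure partial order. Each non-outermost $\ell_i$ sits inside a unique $L_j$ and thus contributes to $|L_j|$, so $\bigcup_j |L_j| = M(R)$; the $L_j$, being distinct trajectories of $F_{V'}$, are pairwise disjoint, and so are their extended regions since outermost loops have disjoint enclosure interiors. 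The main obstacle is the containment argument of Step 1, which hinges on cleanly linking the boundary prescription $V_{\partial R}$, the extension condition $V'(t) = V_{\partial R}(t)$ for $t \in D_1(V_{\partial R})$, and the b-regularity of $\partial R$ ruling out triangles with two boundary edges; once containment is secured, the closure argument and the enclosure bookkeeping follow immediately from the definitions already in place.
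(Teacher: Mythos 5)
Your proposal is correct and follows essentially the same route as the paper, whose entire proof is the one-line observation that the loops of any $V' \in RCNT(R)$ contained in $R$ do the job; you have simply filled in the details (boundary containment via b-regularity of $\partial R$, closure of trajectories by finiteness, and the enclosure bookkeeping) that the paper leaves implicit. The only cosmetic difference is that you restrict to outermost loops while the paper takes all loops in $R$, but both choices satisfy conditions (1) and (2).
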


\begin{proof}
Let $V' \in RCNT(R)$. The collection of all loops (generated by $V'$) contained within $R$ satisfies the conditions.
\end{proof}

\begin{example} 
In Figure \ref{figure5} (c) top, the finite locally regular region $R$ is decomposed into two non-overlapping loops.
\end{example} 

In quantum mechanics, the \textit{positions of the nuclei} determine the state of a molecule through the \textbf{distribution of electrons} within the electron cloud (Figure \ref{figure1} (c) top). In the proposed model, the \textit{outline of the molecule} determines the state of a molecule through \textbf{loop decomposition} (Figure \ref{figure1} (c) bottom).

\begin{figure}
\centering
\captionsetup{width=1.0\linewidth}
\includegraphics{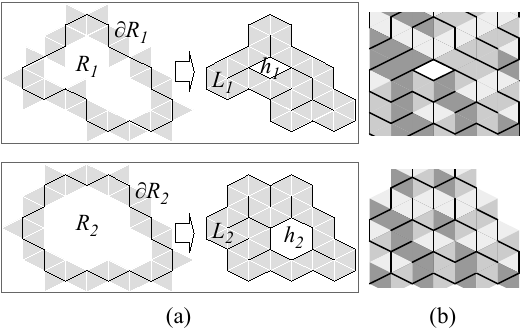}
\caption{(a) Loops with a hole inside obtained by regular continuation. (b) Top view of the corresponding tangent cones.}
\label{figure6}
\end{figure}

When computating loop decompositions by regular continuation, we often encounter a loop with holes inside (Figure \ref{figure6} (a)). In many cases, we can determine if the loop is part of a loop decomposition of the region or not without filling the holes. This topic will be considered in Subsection \ref{sec323}.

\subsection{Computation of loop decompositions} \label{sec32}

Let's begin with an overview of this subsection: First, $M_0$ is placed on the hyperplane $x+y+z=0$ in $\mathbf{R}^3$, and a discrete differential structure is defined on it (Subsections \ref{sec321} and \ref{sec322}). Second, a given region $R$ on $M_0$ is associated with a triangular cone (with multiple tops and no bottom), called \textit{tangent cone} (Subsection \ref{sec323}). Tangent cones are constructed in $\mathbf{R}^3$ by stacking unit cubes diagonally in the direction from $(\infty, \infty, \infty)$ to $(-\infty, -\infty, \infty)$. Finally, a regular continuation $V'$ of $V_{\partial{R}}$ over $R$ is computed using a tangent cone associated with $R$ (Subsection \ref{sec324}). Then, a loop decomposition of $R$ is obtained as the collection of all loops generated by $V'$ contained within $R$.

\subsubsection{Differential structure on $M_0$}\label{sec321}

\begin{figure}
\centering
\captionsetup{width=1.0\linewidth}
\includegraphics{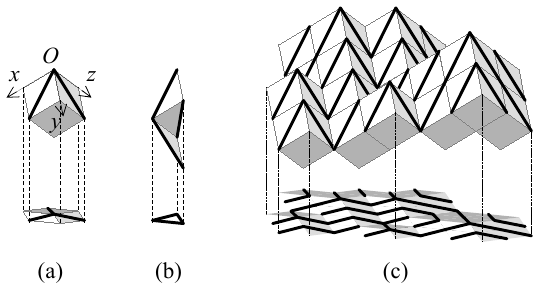}
\caption{(a) Schematic diagram of the unit cube spanned by unit vectors $x$, $y$, and $z$ (top) and its projection on $M_0$ (bottom). Vertical diagonals are indicated by thick lines. (b) Slant triangles over a flat triangle of $M_0$. (c) A tangent cone and the associated normal vector field on $M_0$.}
\label{figure7}
\end{figure}

In the following, point $(l, m, n)$ of $\mathbf{R}^3$ is denoted by $x^ly^mz^n$. We may write ${x_1}^l{x_2}^m{x_3}^n$ instead of $x^ly^mz^n$ when it is convenient (i.e., $x_1=x$, $x_2=y$, and $x_3=z$).

\begin{definition} [$L_3$ and $H$]
The \textit{three-dimensional standard lattice} $L_3$ and the hyperplane $H$ of $\mathbf{R}^3$ are defined by
\begin{align}
L_3& := \{ x^ly^mz^n \in \mathbf{R}^3 \ |\ l, m, n \in \mathbf{Z} \} \subset \mathbf{R}^3,\\
H& := \{ x^ly^mz^n \in \mathbf{R}^3 \ |\ l, m, n \in \mathbf{R}, l+m+n=0 \} \subset \mathbf{R}^3.
\end{align}
The orthogonal projection $\pi$ of $\mathbf{R}^3$ onto $H$ is given by
\begin{equation}
\pi({x^l}{y^m}{z^n}):= x^{(2l-m-n)/3}y^{(-l+2m-n)/3}z^{(-l-m+2n)/3}  \in H.
\end{equation}
\end{definition}

\begin{definition} ($P_1P_2$ and [$P_1, P_2, \cdots, P_k$])
Given $P_1= x^{l_1}y^{m_1}z^{n_1}$ and $P_2 = x^{l_2}y^{m_2}z^{n_2} \in \mathbf{R}^3$. $P_1P_2$ denotes the point $x^{l_1+l_2}y^{m_1+m_2}x^{n_1+n_2}$. Given $P_1$, $P_2$, $\cdots$, $P_k$ and $P \in L_3$. $[P_1, P_2, \cdots, P_k]$ denotes the convex hull defined by $P_1$, $P_2$, $\cdots$, $P_k$, i.e.,
\begin{equation}
[P_1, P_2, \cdots, P_k]:=\left\{ \sum_{1 \leq i \leq k}\lambda_iP_i \ | \ 0 \leq \lambda_i \in \mathbf{R} \  \text{  for  }\  \forall i  \text{, and } \sum_{1 \leq i \leq k}\lambda_i=1  \right\}.
\end{equation}
$P[P_1, P_2, \cdots, P_k]$ is then defined by
\begin{equation}
P[P_1, P_2, \cdots, P_k] := [PP_1, PP_2, \cdots, PP_k].
\end{equation}
\end{definition}

For example, $[P_1, P_2]$ denotes the line segment defined by $P_1$ and $P_2$, and $[P_1, P_2, P_3]$ denotes the triangle defined by three edges $[P_1, P_2]$, $[P_2, P_3]$, and $[P_1, P_3]$. 
 
\begin{example} Given $O=1\ (=x^0y^0z^0)$, $A=x\ (=x^1)$, $B=xy$, $C=y$, $D=yz$, $E=z$, $F=xz$, and $G=xyz \in L_3$ (Figure \ref{figure7} (a)). The unit cube $[O, A, B, C, D, E, F, G]$ has three upper faces $[O, A, B, C]$, $[O, C, D, E]$, $[O, E, F, A]$ and three vertical diagonal lines $[O, B]$, $[O, D]$, $[O, E]$. Then, $[O, A, B, C]$ is divided into two triangles $[O, A, B]$ and $[O, C, B]$ by $[O,B]$.
\end{example}

\begin{definition} [$S$]
Given $p \in L_3$. Triangles $[p, px, pxy]$, $[p, py, pyz]$, $[p, pz, pzx]$, $[p, py, pyx]$, $[p, pz, pzy]$, $[p, px, pxz] \subset \mathbf{R}^3$ are called \textit{slant} triangles. $S$ denotes the collection of all slant triangles, i.e.,
\begin{equation}
S:= \{ [p, px_i, px_ix_j] \subset \mathbf{R}^3 \ |\ p \in L_3, \{i,j\} \subset \{1,2,3\} \text{ such that } i \neq j \}.
\end{equation}
\end{definition}

\begin{definition} [$B$]
Given $p \in L_3$. Triangles $[\pi(p), \pi(px), \pi(pxy)]$, $[\pi(p), \pi(py), \pi(pyz)]$, $[\pi(p), \pi(pz), \pi(pzx)]$, $[\pi(p), \pi(py), \pi(pyx)]$, $[\pi(p), \pi(pz), \pi(pzy)]$, $[\pi(p), \pi(px), \pi(pxz)] \subset H$ are called \textit{flat} triangles. $B$ denotes the collection of all flat triangles, i.e.,
\begin{equation}
B:=\{ [\pi(p), \pi(px_i), \pi(px_ix_j)] \subset H \ |\ p \in L_3, \{i,j\} \subset \{1,2,3\} \text{ such that } i \neq j \}.
\end{equation}
\end{definition}

In this paper, \textbf{we identify $B$ with the collection $M(M_0)$ of all triangles of $M_0$}. A discrete differential structure on $B$ is defined as follows.

\begin{definition} [$S(b)$]
Given $b=[\pi(p), \pi(px_i), \pi(px_ix_j)] \in B$. The \textit{fiber $S(b)$ of $S$ over $b$} is defined by
\begin{align}
&S([\pi(p), \pi(px_i), \pi(px_ix_j)])  \\
&:=\  {\pi_S}^{-1}([\pi(p), \pi(px_i), \pi(px_ix_j)]) \nonumber \\  \nonumber
&=\  \{ \cdots ,\ p[1, x_i, x_ix_j],\ px_i[1, x_j, x_jx_k],\ px_ix_j[1, x_k, x_kx_i],\\  \nonumber
&\phantom{=\  \{ \cdots ,\ }\  px_ix_jx_k[1, x_i, x_ix_j],\ \cdots \}, \nonumber
\end{align}
where $\pi_S$ denotes the projection from $S$ onto $B$ induced by $\pi$, i.e., 
\begin{equation}
\pi_S ([p, px_i, px_ix_j]):= [\pi(p), \pi(px_i), \pi(px_ix_j)] \in B.
\end{equation}
\end{definition}

\begin{definition} [$T$]
The \textit{tangent space} $T$ on $B$ is the quotient of $S$ by an equivalence relation $\sim$, i.e.,
\begin{equation}
T:=S/\sim,
\end{equation}
where the equivalence relation $\sim$ over $S$ is defined by
\begin{equation}
s_1 \sim s_2 \text{ if and only if } \exists n \in \mathbf{Z} \text{ such that } s_1=(x_1x_2x_3)^ns_2 \text{ for } s_1, s_2 \in S.
\end{equation}
\end{definition}

\begin{definition} [$T(b)$]
Given $b=[\pi(p), \pi(px_i), \pi(px_ix_j)] \in B$. The \textit{tangent space $T(b)$ at $b$} (or the \textit{fiber $T(b)$ of $T$ over $b$}) is defined by
\begin{align}
&T([\pi(p), \pi(px_i), \pi(px_ix_j)]) \\
&:=\  {\pi_T}^{-1}([\pi(p), \pi(px_i), \pi(px_ix_j)]) \nonumber \\  \nonumber
&=\  \{p[1, x_i, x_ix_j] \mod \sim,\  px_i[1, x_j, x_jx_k] \mod \sim,\\  \nonumber
&\phantom{=\ \{ }\  px_ix_j[1, x_k, x_kx_i] \mod \sim\}, \nonumber
\end{align}
where $\pi_T$ denotes the projection from $T$ onto $B$ induced by $\pi$, i.e.,
\begin{equation}
\pi_T ([p, px_i, px_ix_j] \mod \sim):= [\pi(p), \pi(px_i), \pi(px_ix_j)] \in B.
\end{equation}
Elements of $T(b)$ are called a \textit{tangent} at $b$.
\end{definition}

\begin{example}
Figure \ref{figure7} (b) shows the tangent space $T(b)$ at $b \in B$.
\end{example}

In this paper, \textbf{we identify the tangents of a flat triangle with the edges of the flat triangle} as shown below.

\begin{definition} [$SS$]
The \textit{side space} $SS$ on $B$ is the collection of all edges of flat triangles, i.e.,
\begin{equation}
SS:=\{ [p_1, p_2], [p_2, p_3], [p_3, p_1] \ |\  [p_1, p_2, p_3] \in B \}.
\end{equation}
\end{definition}

\begin{definition} [$SS(b)$]
Given $b= [p_1, p_2, p_3] \in B$. The \textit{side space $SS(b)$ at $b$} (or the \textit{fiber $SS(b)$ of $SS$ over $b$}) is defined by
\begin{equation}
SS([p_1, p_2, p_3]):= \{[p_1, p_2], [p_2, p_3], [p_3, p_1]\}.
\end{equation}
\end{definition}

\begin{definition} [$NS$]
Given $s=[p, px_i, px_ix_j] \in S$ ($p \in L_3$). The \textit{normal side} $NS(s)$ of $s$ is the edge of $s$ along the slope, i.e.,
\begin{equation}
NS([p, px_i, px_ix_j]):= \pi([p, px_ix_j]) \in SS(\pi_S([p, px_i, px_ix_j])).
\end{equation}
\end{definition}

\begin{definition} [$NS_T$]
Given $t=s \mod \sim \  \in T$. The \textit{normal side $NS_T(t)$ of $t$ induced by $NS$} is  defined by
\begin{equation}
NS_T(t):=NS(s)  \in SS(\pi_S(s)).
\end{equation}
$NS_T$ is well-defined because $NS(s_1)= NS(s_2)$ if $s_1 \sim s_2$.
\end{definition}

\begin{lemma}[$T(b) \leftrightarrow SS(b)$]\label{T_to_SS}
$NS_T$ gives a one-to-one correspondence between the tangent space $T(b)$ and the side space $SS(b)$ for  $\forall b \in B$.
\end{lemma}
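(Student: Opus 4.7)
The plan is a direct verification exploiting that both $T(b)$ and $SS(b)$ contain exactly three elements by construction. Once $NS_T$ is seen to be well-defined and injective, bijectivity follows by cardinality alone. The key arithmetic fact driving everything is that $\pi(x_1 x_2 x_3) = 1$, since $(1,1,1)$ is orthogonal to $H$; this single observation is responsible both for well-definedness and for the fact that the three tangents over $b$ land on three distinct edges.

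First I would confirm well-definedness of $NS_T$ at the level of equivalence classes. If $s_1 \sim s_2$ then $s_1 = (x_1 x_2 x_3)^n s_2$ for some $n \in \mathbf{Z}$, so the defining ``diagonal'' edge $[p, px_i x_j]$ of $s_1$ is the $(x_1 x_2 x_3)^n$-translate of the corresponding edge of $s_2$. Applying $\pi$ and using $\pi(x_1 x_2 x_3) = 1$ yields $NS(s_1) = NS(s_2)$, as asserted in the definition of $NS_T$.

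Next I would evaluate $NS_T$ on the three canonical representatives of $T(b)$. Writing $\{i,j,k\} = \{1,2,3\}$ and $b = [\pi(p), \pi(px_i), \pi(px_i x_j)]$, the representatives are $s_1 = [p, px_i, px_i x_j]$, $s_2 = [px_i, px_i x_j, px_i x_j x_k]$, and $s_3 = [px_i x_j, px_i x_j x_k, px_i x_j x_k x_i]$. Since $x_i x_j x_k = x_1 x_2 x_3$, the identities $\pi(px_i x_j x_k) = \pi(p)$ and $\pi(px_i x_j x_k x_i) = \pi(px_i)$ hold, so the definition of $NS$ gives $NS(s_1) = [\pi(p), \pi(px_i x_j)]$, $NS(s_2) = [\pi(px_i), \pi(p)]$, and $NS(s_3) = [\pi(px_i x_j), \pi(px_i)]$. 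These are precisely the three edges of the flat triangle $b$, each appearing exactly once.

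The main obstacle is purely bookkeeping: tracking the cyclic indices $i, j, k$ through the three representatives and verifying that the $\pi$-collapse of $x_1 x_2 x_3$ separates the three tangents onto the three distinct sides. With this computation in hand, $NS_T$ is exhibited as an explicit bijection between the three-element sets $T(b)$ and $SS(b)$, which finishes the lemma.
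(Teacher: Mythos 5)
Your proposal is correct and takes the same route as the paper, which simply asserts that the lemma ``follows immediately from the definitions'' (with a pointer to Figure \ref{figure7}~(b)); you have merely written out the computation that the paper leaves implicit. In particular, your verification that $\pi(x_1x_2x_3)=1$ gives both the well-definedness of $NS_T$ on $\sim$-classes and the fact that the three representatives $p[1,x_i,x_ix_j]$, $px_i[1,x_j,x_jx_k]$, $px_ix_j[1,x_k,x_kx_i]$ of $T(b)$ map onto the three distinct edges of $b$ matches the intended argument exactly.
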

\begin{proof}
It follows immediately from the definitions (Figure \ref{figure7} (b)).
\end{proof}

\subsubsection{Normal vector fields on $B$}\label{sec322}

Through the identification of Lemma \ref{T_to_SS}, normal vector fields on $M_0$ will be given using \textit{tangent cones} in Subsection \ref{sec323}. Definition \ref{NVF_of_T} is now rewritten as follows.

\begin{definition} [Normal vector fields on $B$]\label{NVF_on_B}
A \textit{normal vector field} $V$ on $B$ is an assignment of a collection of edges to each flat triangle, i.e.,
\begin{equation}
V:  B \ni b \mapsto V(b) \subset SS(b).
\end{equation}
$X(B)$ denotes the collection of all normal vector fields on $B$. \textbf{We identify $X(B)$ with $NVF(M_0)$.}
\end{definition}

Definition \ref{Class_of_T} is rewritten as follows.

\begin{definition} [Classification of triangles]
Given $V \in X(B)$. Triangles of $B$ are classified into four groups $D_0(V)$, $D_1(V)$, $D_2(V)$, and $D_3(V)$:
\begin{align}
D_0(V):=&\{ b \in B \ |\  \sharp{V(b)}=0 \}, \\
D_1(V):=&\{ b \in B \ |\  \sharp{V(b)}=1 \}, \\
D_2(V):=&\{ b \in B \ |\  \sharp{V(b)}=2 \}, \\
D_3(V):=&\{ b \in B \ |\  \sharp{V(b)}=3 \},
\end{align}
where $\sharp{V(b)}$ denotes the number of edges in $V(b)$. Elements of $D_0(V)$ are called \textit{branch} triangles (of $V$). Elements of $D_1(V)$ are called \textit{regular} triangles (of $V$). Elements of $D_2(V)$ are called \textit{terminal} triangles (of $V$). Elements of $D_3(V)$ are called \textit{isolated} triangles (of $V$). By definition,
\begin{equation}
B=D_0(V) \cup D_1(V) \cup D_2(V) \cup D_3(V).
\end{equation}
$D_1(V)$ is called the \textit{domain} of $V$. Triangles that are not regular are called \textit{singular} triangles (of $V$).
\end{definition}

\begin{definition} [Regular normal vector fields on $B$]
Given $V \in X(B)$. $V$ is called \textit{consistent} if
\begin{equation}
V(b_1) \cap SS(b_2) \subset V(b_2)\  \text{ for } \forall b_1, b_2 \in B,
\end{equation}
i.e., $[a,b] \in V(b_2)$ if $[a,b] \in V(b_1)$ is an edge of a slant triangle over $b_2$. $V$ is called \textit{b-regular} if $B = D_0(V) \cup D_1(V)$ and $V$ is consistent. $V$ is called \textit{regular} if $B = D_1(V)$ and $V$ is consistent. 
\end{definition}

\begin{example}
The normal vector field shown in Figure \ref{figure5} (a) is b-regular but not regular.
\end{example}

\subsubsection{Normal vector fields defined by tangent cones}\label{sec323}

First, let's define tangent cones (with multiple tops and no bottom). 

\begin{definition} [$Cone\  A$]
Given $A \subset L_3$. The tangent cone $Cone\ A$ of $L_3$ generated by $A$ is defined by
\begin{equation}
Cone\ A:=\{ px^ly^mz^n \in L_3 \ |\  p\in A,\  0 \leq l, m, n \in \mathbf{Z}^3 \} \subset L_3.
\end{equation}
(Figure \ref{figure7} (c) top).\footnote{A \textit{cotangent} cone is defined by $Cone^{\ast}\ A:= \{  p(yz)^l(xz)^m(xy)^n \in L_3 \  | \  p\in A,\  0 \leq l, m, n \in \mathbf{Z}^3 \} \subset L_3$.} $TCONE_3$ denotes the collection of all tangent cones of $L_3$. $top(w)$ denotes the collection of all top vertices of $w \in TCONE_3$.
\end{definition}

\begin{definition} [$\partial{w}$]
Given $w \in TCONE_3$ and $p \in L_3$.  
$p$ is called \textit{being on the surfaces} of $w$\footnote{By definiiton, $w$ has three surfaces.} if
\begin{equation}
l_w(p):=\max_{a \in top(w)}\{\min \{l,m,n\ |\  ax^ly^mz^n=p \}\}=0.
\end{equation}
$\partial{w}$ denotes the collection of all slant triangles on the surfaces of $w$, i.e.,
\begin{equation}
\partial{w}:=\{s= [p, px_i, px_ix_j] \in S \ |\  p \in L_3, l_w(p)= l_w(px_i)= l_w(px_ix_j)=0\} \subset S.
\end{equation}
\end{definition}

Each tangent cone specifies a regular normal vector field on $B$. In Subsection \ref{sec324}, we will use tangent cones to compute regular continuations.

\begin{definition} [$\Gamma_T(B)$]
A \textit{section of $\pi_T$ over $B$} is a one-to-one mapping $\phi$ from $B$ to $T$ such that $\pi_T \circ \phi =id_B$. In other words, a section of $\pi_T$ over $B$ is an assignment of a tangent to each flat triangle of $B$. $\Gamma_T(B)$ denotes the collection of all sections of $T$ over $B$.
\end{definition}

\begin{lemma} [$V_{\phi}$]
Given $\phi \in \Gamma_T(B)$. A normal vector field, say $V_{\phi}$, is then defined by
\begin{equation}
V_{\phi}(b):=NS_T(\phi(b)) \subset SS(b) \quad (b \in B).
\end{equation}
$V_{\phi}$ is called the \textbf{normal vector field defined by a section $\phi$ of $\pi_T$ over $B$}. $\phi$ is called \textbf{consistent} if $V_{\phi}$ is consistent. By definition, $V_{\phi}$ is regular if $V_{\phi}$ is consistent.
\end{lemma}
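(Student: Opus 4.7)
The plan is to verify the three pieces packed into this lemma---that $V_\phi$ is a well-defined element of $X(B)$, that the definition of consistency transfers correctly between $\phi$ and $V_\phi$, and that consistency alone forces regularity---by directly unpacking the definitions. There is no real obstacle here: the statement is essentially a repackaging of preceding material, and the only substantive input is the bijection already established in Lemma \ref{T_to_SS}.

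First I would check that $V_\phi(b) \subset SS(b)$ for every $b \in B$, so that $V_\phi$ is of the type demanded by Definition \ref{NVF_on_B}. Because $\phi$ is a section of $\pi_T$ over $B$, we have $\pi_T(\phi(b)) = b$; writing $\phi(b) = s \mod \sim$ for some slant triangle $s \in S$ with $\pi_S(s) = b$, the definition of $NS_T$ yields $NS_T(\phi(b)) = NS(s) \in SS(\pi_S(s)) = SS(b)$, which is the required containment. The observation that $NS(s_1) = NS(s_2)$ whenever $s_1 \sim s_2$, recorded when $NS_T$ was introduced, ensures this is independent of the representative $s$, so $V_\phi$ is unambiguously an element of $X(B)$.

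Next I would show that $B = D_1(V_\phi)$ unconditionally. Since $\phi(b)$ is a single tangent at $b$, Lemma \ref{T_to_SS} (the one-to-one correspondence $T(b) \leftrightarrow SS(b)$) sends it under $NS_T$ to exactly one edge in $SS(b)$, so $\sharp V_\phi(b) = 1$ for every $b \in B$. Finally, assuming $V_\phi$ is consistent (equivalently, $\phi$ is consistent, which is how the paper now defines consistency of a section), both clauses in the definition of a regular normal vector field---namely $B = D_1(V_\phi)$ and consistency---are simultaneously satisfied, so $V_\phi$ is regular. This completes the three assertions, and I expect no step to require more than a line of bookkeeping once the fibres $T(b)$ and $SS(b)$ are identified via $NS_T$.
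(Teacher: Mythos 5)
Your proposal is correct and matches the paper's approach: the paper's proof is simply ``It follows immediately from the definitions,'' and your write-up is exactly the routine unpacking it intends, with the fiberwise bijection of Lemma \ref{T_to_SS} giving $\sharp V_{\phi}(b)=1$ for all $b$ and hence regularity under consistency. No gaps.
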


\begin{proof}
It follows immediately from the definitions.
\end{proof}

Projecting the normal sides of $s \in \partial{w}$ of a tangent cone $w$ onto $B$, we obtain a regular normal vector field on $B$ (Figure \ref{figure7} (c)):

\begin{lemma} [$V_w$]
Given $w \in TCONE_3$. A one-to-one mapping $\Gamma_w$ from $B$ to $\partial{w} \subset S$ is uniquely determined by
\begin{equation}
\pi_S \circ \Gamma_w(b)=\pi_S(\Gamma_w(b))=b,
\end{equation}
which induces a section $\phi_w$ of $T$ by
\begin{equation}
\phi_w (b):= \Gamma_w(b) \mod \sim.
\end{equation}
A regular normal vector field $V_w$ on $B$ is then defined by
\begin{equation}
V_w(b):= NS_T (\phi_w(b)) \subset SS(b) \quad  (b \in B).
\end{equation}
$V_w$ is called the \textbf{regular normal vector field defined by a tangent cone $w$}.
\end{lemma}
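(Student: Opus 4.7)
The plan is to verify the three claims of the lemma in sequence: existence and uniqueness of the bijection $\Gamma_w$, well-definedness of the section $\phi_w$, and regularity of the normal vector field $V_w$.

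For $\Gamma_w$, fix a flat triangle $b = [\pi(p), \pi(p x_i), \pi(p x_i x_j)] \in B$. By the formula for $S(b)$ in the definition, the fiber $\pi_S^{-1}(b)$ is a bi-infinite chain of slant triangles whose vertices form a $(1,1,1)$-monotone staircase in $L_3$ lying over $b$; after three chain steps the configuration returns to itself multiplied by $x_1 x_2 x_3$. Because $w$ is a tangent cone, its surface $\partial{w}$ meets each such staircase in exactly one slant triangle: moving sufficiently far up the chain the three vertices enter the strict interior of $w$ and their $l_w$-values become positive, while moving far enough down they leave $w$ altogether. The unique slant triangle in the fiber whose three vertices all satisfy $l_w = 0$ is then defined to be $\Gamma_w(b)$. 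Bijectivity of $\Gamma_w \colon B \to \partial{w}$ follows because every $s \in \partial{w}$ lies over $\pi_S(s) \in B$.

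Given $\Gamma_w$, the map $\phi_w(b) := \Gamma_w(b) \bmod \sim$ is a well-defined element of $T(b)$ since $T = S / \sim$, and the identity $\pi_T \circ \phi_w = \pi_S \circ \Gamma_w = \mathrm{id}_B$ shows $\phi_w \in \Gamma_T(B)$.

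It remains to prove $V_w = NS_T \circ \phi_w$ is regular, i.e., $B = D_1(V_w)$ and $V_w$ is consistent. The first is immediate from Lemma \ref{T_to_SS}: $\phi_w(b)$ is a single tangent and $NS_T \colon T(b) \to SS(b)$ is a bijection, so $V_w(b)$ consists of exactly one edge. The main obstacle is consistency. Assume $b_1, b_2 \in B$ share an edge $e$ and $e \in V_w(b_1)$. Then $\Gamma_w(b_1) = [q, q x_i, q x_i x_j]$ has its vertical diagonal $[q, q x_i x_j]$ projecting to $e$, and the companion slant triangle $[q, q x_j, q x_i x_j]$ on the other half of the same upper face of the unit cube at $q$ projects to $b_2$ with the same normal side $e$. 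If this companion lies in $\partial{w}$, uniqueness from the first part gives $\Gamma_w(b_2) = [q, q x_j, q x_i x_j]$ and hence $e \in V_w(b_2)$. The remaining technical point is that $l_w(q x_j) = 0$. From any witness representation $q = a \cdot x_1^{c_1} x_2^{c_2} x_3^{c_3}$ with $a \in top(w)$, multiplying by any axis $x_k$ increments a single exponent and hence weakly raises the minimum, so $l_w$ is non-decreasing under such multiplications. Applying this twice,
\begin{equation*}
0 = l_w(q) \leq l_w(q x_j) \leq l_w(q x_j x_i) = l_w(q x_i x_j) = 0,
\end{equation*}
which forces $l_w(q x_j) = 0$. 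The companion therefore belongs to $\partial{w}$, consistency holds, and $V_w$ is regular.
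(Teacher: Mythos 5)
Your proposal is correct, and it is worth noting that the paper itself offers no argument here: its proof reads, in full, ``It follows immediately from the definitions.'' Your write-up therefore supplies exactly the details the paper leaves implicit, and the two nontrivial ones are handled well. The consistency verification is the real content of the lemma: identifying the two flat triangles sharing the edge $e=\pi([q,qx_ix_j])$ with the two halves of an upper cube face, and then using the monotonicity of $l_w$ under multiplication by a single generator to squeeze $0=l_w(q)\leq l_w(qx_j)\leq l_w(qx_ix_j)=0$, is precisely the argument needed and is airtight. The one place your sketch is thinner than it should be is the claim that $\partial{w}$ meets each fiber in \emph{exactly} one slant triangle: boundedness in both directions along the staircase does not by itself give existence or uniqueness. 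What closes it is that $l_w$ is non-decreasing along the staircase vertices and advances by exactly $1$ every three steps (since $l_w(p\,x_1x_2x_3)=l_w(p)+1$), so each level set of $l_w$ on the staircase is a block of exactly three consecutive vertices, and such a block is exactly one triangle of the chain. You state both ingredients but do not combine them; adding that one sentence would make the existence--uniqueness part as complete as your consistency part.
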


\begin{proof}
It follows immediately from the definitions.
\end{proof}

\subsubsection{Computation of regular continuations}\label{sec324}

Here are some examples of regular continuation of $V_{\partial{R}}$ over a given region $R$. They are computed using a tangent cone associated with a given region $R$. Note that a loop decomposition of $R$ is obtained as the collection of all loops (generated by the regular continuation) contained within $R$.

\begin{figure}
\centering
\captionsetup{width=1.0\linewidth}
\includegraphics{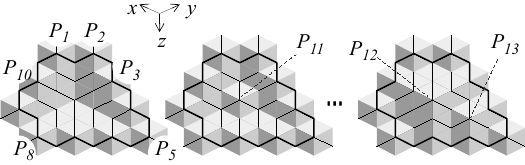}
\caption{Top view of tangent cones $Cone\ A$, $Cone\ A \cup \{P_{11}\}$, and $Cone\ A\cup \{P_{12}, P_{13}\}$ (from left to right), where $A=\{P_1, P_2, \cdots, P_{10}\}$ such that $P_1=1$, $P_2=y/x$, $P_3=yz/x^2$, $P_4=yz^2/x^3$, $P_5= yz^3/x^4$, $P_6= z^3/x^3$, $P_7= z^3/x^2y$, $P_8= z^3/xy^2$, $P_9= z^2/y^2$, $P_{10}=z/y$, $P_{11}=z^2/x$, $P_{12}=z/x^2y$, and $P_{13}=z^2/x^3$. In the figure, the normal sides of slant triangles are represented by black lines. (See also Figure \ref{figure5}.)}
\label{figure8}
\end{figure}

\begin{definition} [Affine region]
Given a finite locally regular region $R \subset B$. $R$ is called \textit{affine} if 
\begin{equation}
\exists w \in TCONE_3 \text{ such that } V_w \in RCNT(R).
\end{equation}
$w$ is called an \textit{tangent cone associated with $R$}.\footnote{A tangent cone is associated with $R$ if and only if the projection image of the vertical diagonals on the tangent cone includes $\partial{R}$.} $ASSOC(R)$ denotes the collection of all tangent cones associated with $R$. By definition, $RCNT(R) \neq \emptyset$ if $ASSOC(R) \neq \emptyset$.
\end{definition}

\begin{example}
$R$ of Figure \ref{figure5} (a) is associated with $Cone\ A$, $Cone\ A \cup \{P_{11}\}$, and $Cone\ A\cup \{P_{12}, P_{13}\}$ of Figure \ref{figure8}.
\end{example}

\begin{remark}
 Associating a region $R$ with a collection of tangent cones locally (i.e., $\partial{R}$ is collectively covered by the collection of tangent cones), we obtain a normal vector field on $R$ that is not necessarily regular.
\end{remark}

\begin{example}
In Figure \ref{figure6} (b) top, $\partial{R_1}$ is collectively covered by three tangent cones. The normal vector field is not regular because of the hole $h_1$.\footnote{However, $RCNT(R_1) \neq \emptyset$ as shown in Figure \ref{figure9} B.} On the other hand, in Figure \ref{figure6} (b) bottom, $R_2$ is affine and the normal vector field is regular.
\end{example}

The two loops in Figure \ref{figure6} (a) have a hole inside; one loop is part of a loop decomposition (bottom), but the other is not (top). In many cases, it is possible to determine if a loop with holes inside is part of a loop decomposition or not without filling the holes with loops.

\begin{example}
In Figure \ref{figure6}, the hole $h_2$ is affine and $RCNT(h_2) \neq \emptyset$. Therefore, $L_2$ is part of a loop decomposition of $R_2$. On the other hand, the hole $h_1$ is not affine. If the following assertion is correct, $RCNT(h_1)=\emptyset$ and $L_1$ is not part of a loop decomposition of $R_1$. 
\end{example}

I have no proof for the following assertion.
\begin{assertion}
Given a (finite) locally regular region $R \subset B$. $R$ is affine if $RCNT(R)\neq \emptyset$.
\end{assertion}

\section{Mathmatical model of protein allosteric regulation}\label{sec4}

Studies of intermolecular interactions have been primarily considering local properties, such as shape complementarity between the protein and the ligand \cite{Ce8}. However, protein allostery implies that the events of intermolecular interactions are not local. Here we propose a \textit{global} model of intermolecular interactions, from which allosteric regulation follows tautologically. For an overview of protein allosteric regulation, see Section \ref{sec1} and Subsection \ref{sec23}.

\subsection{Fundamentals of intermolecular interactions}\label{sec41}

 “Chemistry of the $20$th century was about intramolecular interactions; chemistry of the $21$st century will be about intermolecular interactions” \cite{Be31}. Intermolecular interactions, also called non-covalent interactions, are weak forces between molecules, such as hydrogen bonds, van der Waals' interactions, and Coulombic interactions \cite{AS32,NR33,CM34}. Since non-covalent interactions often work in concert, “even small individual energy contributions may after summation play a significant role” \cite{Cl35}.

Roughly speaking, atoms are \textit{bonded} together to form \textbf{molecules} through covalent interactions (i.e., \textit{full} sharing of electrons), and molecules are attracted each other to form \textbf{intermolecular complexes} through weak but abundant non-covalent interactions. Non-covalent interactions include (1) \textit{partial} sharing of electrons \cite{We36}, (2) electrostatic interactions induced by dynamic electron correlation \cite{PC20,VG37,ZP38,He39}, (3) the entropy-driven hydrophobic effect \cite{PF40}, and (4) others. Hydrogen bonding is a typical example of partial sharing of electrons \cite{WK41,GB42}. London dispersion (a type of the van der Waals interaction) is a typical example of electrostatic interactions induced by dynamic electron correlation \cite{Fe43, HD44, Hu45}. In the case of protein-ligand binding, hydrogen bonding mainly determines binding specificity, while van der Waals and hydrophobic interactions mainly determine binding affinity \cite{MW15}.

\begin{remark}
Electrostatic interactions are simply electromagnetic interactions in which the magnetic component is neglected, and are a good approximation for the purpose of calculating the static properties of atoms and molecules. 
\end{remark}

\begin{remark} Of the four fundamental forces in physics, it is the electromagnetic force that is relevant to interactions between molecules. Therefore, we can say that all non-covalent interactions are quantum mechanical in origin and are driven by electrostatic interactions. The question is which contribution is stronger, electrostatics or quantum mechanics (i.e., the dynamics of the electromagnetic field)? See also Subsection \ref{sec21}.
\end{remark}

The driving force behind intermolecular complex formation is the decrease in Gibbs free energy $G$ of the system, which is defined by $G=H-TS$ using enthalpy $H$, temperature $T$, and entropy $S$ \cite{Ch46, ZR47}. Simply put, Gibbs free energy $G$ is the \textit{chemical potential} stored in the arrangement of atoms within a molecule and available to do work {for free}. The interaction energy $\triangle{G}$ is calculated by subtracting the energy in its initial state from the energy in its final state: $\triangle{G}:=G_{final} - G_{initial}$. On the other hand, the \textit{strength} of interactions corresponds to the difference $\triangle{H}$ of enthalpies: $\triangle{H}:=H_{final} - H_{initial}$. That is, the stronger the bond, the larger the absolute value of $\triangle{H}$.\footnote{Since $G$ and $H$ are an extensive property (like mass or volume), their units are \textit{kcal/mol}.}

\begin{remark}
Enthalpy $H$ is a measure of the total energy (i.e.,  the amount of thermal energy stored), and the change $\triangle{H}$ in $H$ is equal to the energy released ($\triangle{H}<0$) or absorbed ($\triangle{H}>0$) during any process that occurs at constant pressure. 
\end{remark}

\begin{remark}
Entropy $S$ is a measure of uncertainty \cite{Le48}, and its change $\triangle{S}$ indicates the overall decrease ($\triangle{S}<0$) or increase ($\triangle{S}>0$) in the number of microscopic states of a system (i.e., the degree of the freedom of the system). Macromolecules (e.g., proteins) have many degrees of freedom and a variety of conformations, which can lead to a large entropic contribution. For example, the conformational entropy of a protein is defined by the distribution of conformational states populated by the protein. Interactions then cause a redistribution of the populated states.
\end{remark}

\begin{remark}
Tight binding often involves a favorable enthalpy change ($\triangle{H}<0$\footnote{The binding process releases thermal energy, making the products more stable than the reactants.}), an unfavorable entropy change ($\triangle{S}<0$) due to the reduced mobility, and a favorable entropy change ($\triangle{S}>0$) in its surrounding environment. 
\end{remark}

The strength of covalent bonds (i.e., the difference between the total energy of the bonded atoms and the total energy of the separated atoms) typically ranges from $-50$ to $-200$ kcal/mol. For example, the bond energies of a C–C bond, a C=C bond, and a C$\equiv$C bond are about $-85$, $-145$, and $-200$ kcal/mol, respectively \cite{Ro49}. Covalent bond breaking and covalent bond formation are the building blocks of chemical reactions\footnote{Covalent bonds of proteins are not broken during their lifetime.}.

On the other hand, the overall strength of the noncovalent interactions usually ranges from $-0.5$ to $-50$ kcal/mol \cite{AS32}, typically on the order of $-1$ to $-5$ kcal/mol \cite{WK50}. For example, the energy of hydrogen bonds is usually in the range from $-3$ to $-15$ kcal/mol \cite{GB42}. Van der Waals interactions are relatively weak, typically ranging from $-0.5$ to $-1$ kcal/mol \cite{RK51}. In the case of protein-ligand binding, the binding affinity of ligands rarely exceeds $-15$ kcal/mol, which corresponds to an average bound lifetime of less than one day \cite{KC52, Co53}. Since the half-life of most proteins is less than one day, it is believed that there was no evolutionary pressure to create stronger binding \cite{SE16}.

\begin{remark} At room temperature ($25$ ${}^\circ{C}$ or $77$ ${}^\circ{F}$), the average kinetic energy of an ideal (non-interacting) gas is about $-0.9$ kcal/mol (the equipartition theorem \cite{WK54}). On the other hand, the overall stability of a protein (difference in free energy between folded and unfolded states) ranges from $-5$ to $-14$ kcal/mol \cite{MW15}. Therefore, proteins are stable at room temperature.
\end{remark}

\subsection{The proposed model of intermolecular interactions}\label{sec42}

Now, let's give the \textit{global} definition of intermolecular interactions proposed in this paper. In the model, the energy $E(L)$\footnote{See Definition \ref{E_of_traj}.} of a molecule $L$ plays the role of Gibbs free energy. The contribution of entropy is not explicitly considered.

\begin{definition} [Molecules]
A \textit{molecule} is a loop\footnote{See Definition \ref{loops}.} of a consistent normal vector field on $B$. If there are trajectories enclosed within a molecule, they are considered to be part of the enclosing molecule. In other words, molecules may have holes inside (i.e., singularities).
\end{definition}

When considering molecules, we often omit the underlying normal vector field and instead consider the \textit{minimal} vector field necessary to define the molecule.

\begin{definition} [$V_L$]
Given a molecule $L$ of a consistent normal vector field $V$ on $B$.
The \textit{normar vector field $V_L$ associated with $L$} is defined by
\begin{equation}
V_L(b):= 
\begin{cases}
V(b) \subset SS(b)  &\text{ if } b \in L, \\
V(b)\cap \{V(b')\ |\  b' \in L\}  \subset SS(b)   &\text{ if } b \not\in L.
\end{cases}
\end{equation}
By definition, $V_L$ is b-regular and consistent.
\end{definition}

\begin{definition} [$E(L)$]
Let $L$ be a molecule with enclosed loops {$L_1, L_2, \cdots, L_k$}. The \textit{(free) energy $E(L)$ of $L$} is defined by
\begin{equation}
E(L):=1/ len(L).
\end{equation}
\end{definition}

\begin{definition} [Intermolecular complexes]
An \textit{intermolecular complex} is a loop complex\footnote{See Definition \ref{loops}.} of a consistent normal vector field on $B$.
\end{definition}

We often omit the underlying normal vector field and instead consider the \textit{minimal} vector field necessary to define the intermolecular complex.

\begin{definition} [$V_C$]
Given an intermolecular complex $C=\{L_1, L_2,  \cdots , L_k\}$ of a consistent normal vector field $V$ on $B$.
The \textit{normar vector field $V_C$ associated with $C$} is defined by
\begin{equation}
V_C(b):= 
\begin{cases}
V(b) \subset SS(b)  &\text{ if } \exists i \in [1, k] \text{ s.t. }b \in L_i , \\
V(b)\cap \{V(b')\ |\  b' \in \bigcup_{1 \leq i \leq k} L_i \} \subset SS(b)  &\text{ if } b \not\in L_i \text{ for } \forall i \in [1, k].
\end{cases}
\end{equation}
By definition, $V_C$ is b-regular and consistent.
\end{definition}

\begin{definition} [$E(C)$]
Given an intermolecular complex $C=\{L_1, L_2,  \cdots , L_k\}$. 
The \textit{(free) energy $E(C)$ of $C$} is defined by
\begin{equation}
E(C):= \sum_{1 \leq i \leq k} E(L_i).\footnote{See Definition \ref{E_of_traj2}.}
\end{equation}
\end{definition}

\begin{definition} [Intermolecular interactions]\label{M_of_II}
Given two molecules $L_1$ and $L_2$. \textit{$L_1$ and $L_2$ interact} if 
\begin{equation}
\text{ there is a molecule } L_3 \text{ such that } |L_3|=|L_1| \cup |L_2|.
\end{equation}
Note that interactivity depends on a \textit{global} property, i.e., the overall shape of $|L_1|\cup |L_2|$. $L_3$ is denoted by $L_1+L_2$, but may not be determined uniquely\footnote{The region $|L_1|\cup |L_2|$ may have several one-stroke loops that sweep through itself.}. By definition, 
\begin{equation}
E(L_1+L_2) \leq E(L_1)+E(L_2),
\end{equation} 
if molecules $L_1$ and $L_2$ interact.
\end{definition}

\begin{remark} Given three molecules $L_1$, $L_2$, and $L_3$. Then, $(L_1+L_2)+L_3=(L_1+L_3)+L_2$ if both sides of the equation are well defined. Allostery implies one side of the equation is not well defined.
\end{remark}

\begin{figure}
\centering
\captionsetup{width=1.0\linewidth}
\includegraphics{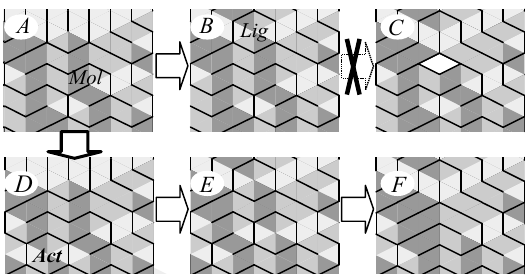}
\caption{Schematic diagram showing the tangent cones that provide the flows of Figure \ref{figure3}.}
\label{figure9}
\end{figure}

\begin{proposition} [Sufficient conditions for reactivity]
Given an intermolecular complex $C=\{L_1, L_2\}$ of two molecules. Suppose that there is a loop of length $k$, say $L_3=\{b_1, b_2, \cdots, b_k\} \subset B$, such that

\begin{equation}
\begin{cases}
&|L_3| \subset|L_1|\cup |L_2| \subset B, \\
&\partial(|L_1|\cup |L_2|) \subset \bigcup_{1 \leq i \leq k} V_{L_3}(b_i) \subset \bigcup_{1 \leq i \leq k} SS(b_i).
\end{cases}
\end{equation}

That is, $L_3$ is contained within $|L_1|\cup |L_2|$, and the normal sides of the normal vector field $V_{L_3}$ associated with $L_3$ contains the boundary of $|L_1|\cup |L_2|$.\footnote{See Figure \ref{figure6} (a) for examples.} Then, $L_1$ and $L_2$ interact if all the holes in $L_3$ are affine.
\end{proposition}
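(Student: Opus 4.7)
The plan is to exhibit $L_3$ itself (with suitably filled holes) as the fused molecule $L_1 + L_2$, thereby satisfying Definition \ref{M_of_II}. That is, I will show the existence of a loop $L$ of a consistent normal vector field on $B$ with $|L| = |L_1| \cup |L_2|$, by taking $L = L_3$ and verifying that its extended region recovers all of $|L_1| \cup |L_2|$. Since $L_3$ is assumed to be a loop, $V_{L_3}$ is b-regular and consistent by definition, so the only nontrivial task is the set-theoretic identity $|L_3| = |L_1| \cup |L_2|$.

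First I would partition $|L_1| \cup |L_2|$ into the region $|L_3|_0 = \{b_1,\ldots,b_k\}$ swept by the loop itself, and its complement $H := (|L_1| \cup |L_2|) \setminus |L_3|_0$. Because $L_3 \subset |L_1|\cup|L_2|$ and because the condition $\partial(|L_1|\cup|L_2|) \subset \bigcup_i V_{L_3}(b_i)$ forces $L_3$ to trace along (and cover, by its normal sides) the outer boundary, each connected component of $H$ is bounded only by edges from $\bigcup_i V_{L_3}(b_i)$ that face inward; these are precisely the holes referred to in the hypothesis. Next, for each hole $h \subset H$, the assumption that $h$ is affine yields, by definition, a tangent cone $w_h \in ASSOC(h)$ with $V_{w_h} \in RCNT(h)$, and so (by the Loop decomposition proposition applied to the locally regular region $h$) a finite collection of non-overlapping loops of $V_{w_h}$ whose extended regions together equal $h$. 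Iterating over all holes and taking the union of these loops gives the candidate set of enclosed trajectories for $L_3$, so that by Definition of $|L|$,
\begin{equation*}
|L_3| = |L_3|_0 \cup \bigcup_{h \subset H} h = |L_1| \cup |L_2|,
\end{equation*}
as required.

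The main obstacle will be the coherence step: we must verify that $V_{L_3}$ together with all the $V_{w_h}$ patch together into a \emph{single} consistent normal vector field on $B$, since a molecule is required to be a loop of a consistent normal vector field and the enclosed trajectories must be trajectories of that same field. The key observation I would invoke is that, by construction of $V_{w_h}$ as a regular continuation of $V_{\partial h}$, the normal sides assigned by $V_{w_h}$ along $\partial h$ agree with those already assigned by $V_{L_3}$ on the triangles of $L_3$ bordering $h$ (those normal sides are exactly the edges forming $\partial h$). Consequently the pointwise union of $V_{L_3}$ and all the $V_{w_h}$ is well-defined, b-regular on $|L_1|\cup|L_2|$, and consistent wherever two of them meet; outside $|L_1|\cup|L_2|$ we simply inherit the underlying ambient consistent field from the complex $C$. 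With coherence established, the loops generated over the holes genuinely count as enclosed trajectories of $L_3$ in the sense of Definition \ref{loops}, the identification $|L_3| = |L_1|\cup|L_2|$ upgrades $L_3$ to a molecule realizing $L_1 + L_2$, and the desired interaction follows.
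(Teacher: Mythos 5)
Your proposal is correct and follows essentially the same route as the paper: the affineness of each hole gives a nonempty $RCNT(h)$, hence a loop decomposition of the hole by regular continuation, and these together with $L_3$ realize a loop decomposition of $|L_1|\cup|L_2|$, i.e., a molecule whose extended region is the whole union. Your additional coherence step (checking that $V_{L_3}$ and the fields $V_{w_h}$ patch into one consistent field) is a point the paper's one-line proof leaves implicit, and is a worthwhile piece of extra care rather than a different argument.
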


\begin{proof}
Let $\{h_1, h_2, \cdots, h_k\}$ be the holes inside $L_3$ ($h_i$'s are regions on $B$). Since $h_i$’s are affine, $RCNT(h_i)\neq \emptyset$ for all $i$. In particular, $h_i$ has a loop decomposition generated by a regular continuation of $\partial{h_i}$ over $h_i$ for all $i$. Then, $L_3$ and the loop decompositions of the holes $h_i$ ($1 \leq i \leq k$) constitute a loop decomposition of $|L_1|\cup |L_2|$.
\end{proof}

\begin{example} [Allosteric regulation]
Let’s consider the case of Figure \ref{figure2}. Shown in Figure \ref{figure9} are top view of the tangent cones that provide the flows corresponding to Figure \ref{figure2}.\footnote{See also Figure \ref{figure3}}. First, $Lig$ and $Mol$ don’t interact because there are no one-stroke loops for $Lig\cup Mol$. For example, the hole of the loop shown in (C) is not affine. Second, $Act$ and $Mol$ interact because the hole of the loop shown in (E) is affine. Finally, $Lig$ and $Act+Mol$ also interact because the loop shown in (F) has no hole.
\end{example}

\subsection{An example of protein allosteric regulation}\label{sec43}

To get a sense of what intermolecular interactions are, let’s consider \textit{small molecule allosteric drugs} that targets mutated proteins \cite{LK55, XB56, GT57}.
\begin{remark}
\textit{Small molecule drugs} are chemically synthesized compounds with a low molecular mass, typically below $500$ g/mol.
\end{remark}

\subsubsection{Actual examples}\label{sec431}

Diseases are often caused by mutant proteins  \cite{CF58}, i.e., mutations in the DNA sequence of the genomes. In particular, most cancers occur due to mutant proteins. Therefore, disease-causing mutant proteins are the most common targets for therapeutic drugs \cite{HM59}.  Among the most frequently mutated genes in cancer are RAS and TP53, where RAS is the most frequently mutated oncogene (which become oncogenes when mutated) and TP53 is the most frequently mutated tumor suppressor gene (which lose their function when mutated) \cite{DC60}. Here we provide an example of small molecule inhibitors that target the protein products of RAS.

The Kristen-RAS (K-RAS), a product of a RAS gene in humans, is a membrane-bound small protein composed of $188$ amino acids with a molecular mass of $21.6$ kg/mol and a half-life of approximately $24$ hours. K-RAS acts as a molecular switch in intracellular signaling pathways, converting\footnote{hydrolyzing} a molecule called GTP into another molecule called GDP. To transmit signals, K-RAS must be activated by binding to a GTP molecule. Upon GTP binding, K-RAS adopts an active conformation and interacts with a protein called RAF to initiate downstream signaling cascades. K-RAS then returns to its inactive state by converting the bound GTP to GDP. The conversion of GTP to GDP is significantly accelerated when a protein called GAP binds to K-RAS. The release of GDP and the binding of another GTP is facilitated by a protein called GEF. Since K-RAS has a high affinity for GDP and GTP, it is virtually impossible to exchange GDP for GTP without the help of GEF. In this way, K-RAS cycles between two states: GTP-bound (\textit{on}) and GDP-bound (\textit{off}) in normal cells. 

Most K-RAS mutations occur at residue 12 which is normally occupied by glycine \cite{CO61}. Mutation of residue 12 from glycine (G) to cysteine (C) prevents the formation of van der Waals interactions between K-RAS and GAP through steric hindrance \cite{AS62} and impairs GAP-stimulated GTP hydrolysis. As a result, the mutant K-RAS (denoted by K-RAS\textsuperscript{G12C}) remains in the active state for a much longer period of time, leading to overactivation of the signaling pathway.\footnote{For the effects of mutations on the local electrostatic environment, see \cite{HM63}.} The mutation does not perturb RAF and GEF binding. 

AMG510\footnote{Also know as \textit{sotorasib}, molecular mass $561$ g/mol, half-life $5.5$ hours, binding energy ($\triangle{G}$) $-88$ kcal/mol) \cite{SL64, IS65}} and MRTX849\footnote{Also know as \textit{adagrasib}, molecular mass $604$ g/mol, half-life $23$ hours, binding energy ($\triangle{G}$) $-89$ kcal/mol) \cite{IS65, JR66}} are inhibitors of K-RAS\textsuperscript{G12C} that bind to an \textbf{regulatory site} (known as the switch-II pocket) of GDP-bound K-RAS\textsuperscript{G12C} and form covalent interactions with the cysteine-12 residue \cite{OP67, DB68}. 

Binding of these inhibitors (1) shifts the relative affinity of K-RAS\textsuperscript{G12C} in favor of GDP over GTP by partially disrupting the GDP/GTP binding site and (2) \textbf{allosterically impairs} RAF binding to K-RAS\textsuperscript{G12C}. The regulatory site is only accessible when K-RAS\textsuperscript{G12C} is bound to GDP, so the inhibitor must be bound over a significant number of K-RAS\textsuperscript{G12C} cycles to effectively block K-RAS\textsuperscript{G12C}. This is why a stable covalent bond with the cysteine-12 is needed.\footnote{Cysteine residues can be covalently bonded not only to each other via disulfide bonds (S-S bonds) but also to a variety of molecules.}

In the case of mutation of residue 12 from glycine (G) to aspartate (D), the mutant K-RAS\textsuperscript{G12D} lacks the reactive cysteine residue present in K-RAS\textsuperscript{G12C}, making it a major challenge to design selective compounds that bind to K-RAS\textsuperscript{G12D} in a stable manner \cite{DB68}. Using the valuable insights and inspiration provided by the development of K-RAS\textsuperscript{G12C} inhibitors, a non-covalent KRAS\textsuperscript{G12D} inhibitor called MRTX1133\footnote{Molecular mass $601$ g/mol, half-life $50$ hours, binding energy ($\triangle{G}$) $-73$ kcal/mol.} was developed in 2021 \cite{WW69, MI70, IM71}. 

MRTX1133 is a noncovalent inhibitor of KRAS\textsuperscript{G12D} that bind to the switch-II pocket of both GDP-bound and GDP-bound K-RAS\textsuperscript{G12D}s with high affinity without the requirement for covalent interactions. It works by \textbf{allosterically impairing} RAF binding to K-RAS\textsuperscript{G12D}. Note that the absolute value of the binding energy is comparable to that of the covalent inhibitors mentioned above. Because noncovalent interactions act cooperatively, the small energy contributions from each weak interaction add up to a significant amount of energy \cite{IM71, Cl35}.

\begin{figure}
\centering
\captionsetup{width=1.0\linewidth}
\includegraphics{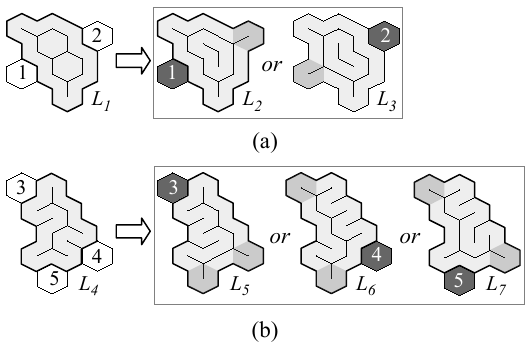}
\caption{(a) First-order inhibitors. White hexagons indicate h-active sites and black hexagons indicate non h-active sites. $Hex_1$ dose not bind to $L_2=L_1+Hex_2$ and $Hex_2$ dose not bind to $L_3=L_1+Hex_1$, (b) Second-order inhibitors. $Hex_3$ dose not bind to $L_5=(L_4+Hex_4)+Hex_5$, $Hex_4$ dose not bind to $L_6=(L_4+Hex_3)+Hex_5$, and $Hex_5$ dose not bind to $L_7=(L_4+Hex_3)+Hex_4$.}
\label{figure10}
\end{figure}

\subsubsection{Reproduction of allosteric regulation}\label{sec432}

Now that we have some understanding of intermolecular interactions, let's return to the model of intermolecular interactions proposed\footnote{See Definition \ref{M_of_II}}. To mimic the binding of small molecule drugs to a protein, we consider binding of hexagons (i.e., loops of length $6$) to a loop.

\begin{definition} [h-active sites]
Given two molecules, a loop $L_a$ and a hexagon $H_a$. Suppose that $L_a$ and $H_a$ interact. The \textit{h-active site} of $L_a$ for $H_a$ is the location of the boundary of $L_a$ where $H_a$ binds.
\end{definition}

The h-active sites of a molecule are uniquely determined by the shape of the molecule.

\begin{example} In Figure \ref{figure10} (a), both of site $Site_1$ (white hexagon numbered 1) and $Site_2$ (white hexagon numbered 2) of $L_1$ are h-active, $Site_1$ (black hexagon numbered 1) of $L_2$ is not h-active, and $Site_2$ (black hexagon numbered 2) of $L_3$ is not h-active.
\end{example}

\begin{definition} [$n$-th order inhibition/activation]
\textit{$n$-th order inhibition} or \textit{$n$-th order activation} is the inhibition or activation (of the binding of a molecules) by the binding of $n$ molecules, respectively. \textit{Inhibitors} are molecules that cause inhibition. \textit{Activators} are molecules that cause activation.
\end{definition}

\begin{example} In Figure \ref{figure10} (a), $Hex_1$ is a first order inhibitor of $Hex_2$ and vice versa, i.e., binding of $Hex_1$ to $L_1$ inhibits binding of $Hex_2$ and vice versa. In Figure \ref{figure10} (b), each pair of hexagons $Hex_i$ and $Hex_j$ forms a second-order inhibitor of $Hex_k$, where $\{i, j, k\}=\{3, 4, 5\}$.
\end{example}

\begin{remark} In the example above, AMG510 (or MRTX849) binds to K-RAS\textsuperscript{G12C} only if a GDP molecule binds to K-RAS\textsuperscript{G12C} (first-order activation). AMG510 (or MRTX849) dose not bind to K-RAS\textsuperscript{G12C} due to steric hindrance when a GTP molecule is bound to K-RAS\textsuperscript{G12C} (first-order inhibition). GDP and GTP molecules do not bind to K-RAS\textsuperscript{G12C} at the same time because they share a binding site (first-order inhibition). Neither GAP nor RAF binds to K-RAS\textsuperscript{G12C} when both AMG510 (or MRTX849) and GDP molecules bind to K-RAS\textsuperscript{G12C} (second-order inhibition).
\end{remark}

Using examples, let’s examine \textit{couplings} of bindings on the h-active sites of a molecule. For simplicity, we only consider first-order activations/inhibitions of h-active sites.

\begin{figure}
\centering
\captionsetup{width=1.0\linewidth}
\includegraphics{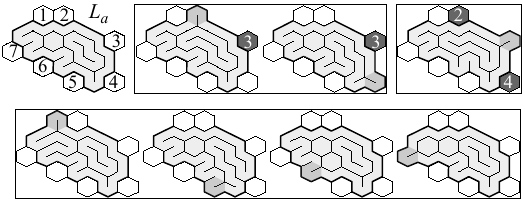}
\caption{Seven h-active sites $\{Site_1, Site_2, \cdots , Site_7\}$ of molecule $L_a$. White hexagons indicate h-active sites and black hexagons indicate non h-active sites. $Hex_k$ binds to $Site_k$, inhibiting the binding of the hexagons colored black. For example, $Hex_2$ binds to $Site_2$, inhibiting $Site_3$. The seven h-active sites are classified into three groups according to the pattern of inhibited h-active sites: $\{Site_2, Site_4\}$, $\{Site_3\}$, and $\{Site_1, Site_5, Site_6, Site_7\}$.}
\label{figure11}
\end{figure}

\begin{figure}
\centering
\captionsetup{width=1.0\linewidth}
\includegraphics{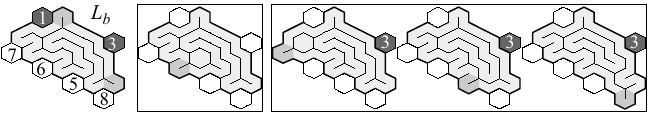}
\caption{Four h-active sites $\{Site_5, Site_6, Site_7, Site_8\}$ of molecule $L_b=(L_a+Hex_2)+Hex_4$, where $L_a$ is shown in Figure \ref{figure11}. White hexagons indicate h-active sites and black hexagons indicate non h-active sites. $Site_1$ and $Site_3$ of $L_b$ are inactive. (Note that the inhibition of $Site_1$ is a 2nd-order.) $Hex_k$ binds to $Site_k$, inhibiting the binding of the hexagons colored black. The four h-active sites are classified into two groups according to the pattern of h-active sites: $\{Site_6\}$ and $\{Site_5,Site_7,Site_8\}$.}
\label{figure12}
\end{figure}

\begin{definition} [Druggable h-active sites]
h-active sites are called \textit{druggable} if there is a first order activator (or inhibitor) which activates (or inhibits) the binding of hexagons to the h-active site, but does not activate binding to other active sites (i.e., no side-effects). 
\end{definition}

\begin{example} [Inhibitors] In Figure \ref{figure11}, the binding of $Hex_2$ at $Site_2$ (or $Hex_4$ at $Site_4$) inhibits $Site_3$. The binding of $Hex_3$ at Site3 inhibits two h-active sites $Site_3$ and $Site_5$. The binding of the other hexagons to $L_a$ inhibits no h-active sites. Thus, only $Site_3$ is druggable (by $Hex_2$ or $Hex_4$).
\end{example}

\begin{example} [Activators] In Figure \ref{figure12}, the binding of $Hex_6$ at $Site_6$ activates both $Site_1$ and $Site_3$. The binding of $Hex_5$ at $Site_5$ (or $Hex_7$ at $Site_7$ or $Hex_8$ at $Site_8$) activates $Site_1$. Thus, only $Site_1$ is druggable (by $Hex_5$, $Hex_7$, or $Hex_8$). 
\end{example}

\begin{remark} In Figure \ref{figure11}, the length ratio of $L_a$ ($88$ triangles) to a hexagon ($6$ triangles) is approximately $15$. On the other hand, the K-Ras protein (molecular mass $22$ kg/mol) interacts with a GTP molecule ($523$ g/mol), a GDP molecule ($443$ g/mol), a GAP protein ($120 \sim 130$ kg/mol), a GEF protein ($110 \sim 150$ kg/mol), a RAF protein ($70 \sim 75$ kg/mol), and small molecule inhibitors ($561 \sim 604$ g/mol). In particular, the mass ratio of K-RAS to inhibitor is approximately $40$. 
\end{remark}

\section{Discussion}\label{sec5}

A clear and simple mathematical model of intermolecular interactions is proposed to elucidate \textit{protein allosteric regulation}, the basis of protein communication and interaction. It is based on the concept of \textit{electron delocalization}, one of the main features of quantum chemistry, and provides new insights into \textit{coupling of bindings of molecules on a protein}. Allosteric regulation then follows tautologically from the definition of intermolecular interactions. However, since this is a toy model, it cannot predict actual protein behavior. The problem here is how to verify the validity of the model without verification of predictions. That is, dose this model provide a basis for protein communication and interaction?

So far, there's no reliable and universally accepted way to determine how accurately a model of protein interactions reflects the real-world situation it's supposed to represent. For example, in the orbital approximation of many-electron systems, the orbitals are fictitious and not physically observable. Interpretation based on orbitals is therefore very intuitive (see Section \ref{sec2}). Instead, we would like to examine the usefulness of the proposed model here. That is, whether this model will provide a new approach to elucidating the fundamental principles of protein communication and interaction or not.

First, this model is not only clear and simple, but also complex enough to reproduce allosteric regulation of proteins (Figure \ref{figure9}, \ref{figure11}, \ref{figure12}). We can then calculate immediately both (1) the location of active sites of a protein and (2) the presence or absence of allosteric regulation between them. On the other hand, in the previous models, both (1) and (2) are given in advance and neither is obtained by calculation. As a result, previous models are often example-specific and have not revealed the underlying mechanism of allostery without conformational changes. Thus, it can be said that the proposed model actually provides a new approach to elucidating the fundamental principles of protein communication and interaction. 

Features of the proposed model include:
\begin{enumerate}
\item	Geometric: Whether or not molecules interact depends solely on their shape.
\item	Global: Binding at an active site affects binding at a distal active site.
\item	No conformational change: Molecules are not deformed by intermolecular interactions. 
\item	Loop representation: The state of a protein corresponds to a decomposition of the protein into a loop complex. Loops correspond to the states of the electron cloud of a protein, not to the folding of the amino acid sequence of a protein.
\item	No memory (of reaction pathways): Given three molecules $L_1$, $L_2$, and $L_3$. Then, $(L_1+L_2)+L_3=(L_1+L_3)+L_2$ if both sides of the equation are well defined. Note that either $(L_1+L_2)+L_3$ or $(L_1+L_3)+L_2$ may not be well defined due to allosteric regulation.
\end{enumerate}

These are primarily direct consequences of the top-down description of molecules, which implicitly incorporates an effect of quantum mechanics, i.e., electron delocalization, into the model.\footnote{Recall that all intermolecular interactions are quantum mechanical in origin.} In contrast, no description of shape is available and the incorporation of quantum effects is not obvious in the conventional bottom-up representation of a protein, i.e., network of amino-acid residues.

Moreover, on the technical side, the top-down description allows us to apply global theories of mathematics to the problems of proteins. For example, when the region covered by a loop has holes inside, It is often possible to determine the existence of loop decompositions of the holes without filling them. If a loop decomposition exists for every hole, the loop and the holes constitute a loop decomposition of the entire region.

Limitations of the proposed model include:
\begin{enumerate}
\setcounter{enumi}{5}
\item Sharing of electrons only: Not considered explicitly are electrostatic interactions induced by dynamic electron correlation, the entropy-driven hydrophobic effect, and other types of non-covalent interactions. 
\item No distinction between covalent interactions and non-covalent interactions. As a result, the distinction between molecules and molecular complices is no longer self-evident.
\item No consideration of entropy: The contribution of entropy S (and temperature T) in Gibbs free energy $G=H-TS$ (Subsection \ref{sec41}) is not explicitly considered. (Entropy effects are implicitly included in the model because \textit{delocalization} of particles increases the \textit{uncertainty} of the system.) 
\item Two-dimensional: For simplicity, we only consider the case of $2$-simplices, i.e. triangles.
\item Flows on a flat mesh $M_0$ only. 
\end{enumerate}

As for (6), it can be said that the state of the electron cloud, which is the result of all interactions between nuclei and electrons of a molecule, is directly considered. However, it may be necessary to consider other types of non-covalent interactions as well when predicting the real world phenomena.

As for (7), this is a consequence of (6), and the clarity and simplicity of the model may owe much to (6) and (7).

As for (8), this is a consequence of the fact that the objective of this study is not the thermodynamical description of allostery \textit{phenomenon}, but the geometrical description of allostery \textit{mechanism}. 

As for (9), analysis of $3D$ models is one of the future challenges. However, there is a gap between the $2D$ model and the $3D$ model, and generalization to higher dimensions is not so obvious. For example, in the $3$-dimensional case, a unit cube $[0,1]^4$ is projected onto a rhombic dodecahedron, which consists of four tetrahedral loops of length six. More steps are thus required to find a loop of tetrahedra that sweeps a given $3$-dimensional region. 

As for (10), flows on a non-flat mesh can be obtained by considering flows of $2$-simplices induced on the surfaces of a trajectory of $n$-simplices ($n>2$). For more realistic modeling of protein interactions, studies on flows on a non-flat mesh may be required in the future.

Finally, this paper is also intended as a concise introduction to Quantum Chemistry, Protein Allosteric Regulation, Intermolecular Interactions, and Small Molecule Allosteric Drugs. I tried to explain the reasoning behind the theories in plain language as best I could. To my knowledge, there is no other literature that collectively explains these topics to non-specialists. I hope this paper will provide a starting point for many mathematicians to study chemistry and molecular biology.

\end{document}